\newcommand{\defn}[1]{\emph{#1}}
\newcommand{\me}{e}
\newcommand{\mi}{i}
\renewcommand{\Re}{\operatorname{Re}}
\renewcommand{\Im}{\operatorname{Im}}
\newcommand{\sign}{\operatorname{sign}}
\newtheorem{theorem}{Theorem}[section]
\newtheorem{lemma}[theorem]{Lemma}
\newtheorem{proposition}[theorem]{Proposition}
\newtheorem{corollary}[theorem]{Corollary}
\theoremstyle{definition}
\newtheorem{definition}[theorem]{Definition}
\theoremstyle{remark}
\newtheorem{remark}[theorem]{Remark}
\begin{document}

\title{Reconstruction phases in the planar three- and four-vortex problems}
\author{Antonio Hern\'andez-Gardu\~no \\
\normalsize Independent scholar, Mexico \\
{\footnotesize \texttt{antonio.hega@gmail.com}}\\
\\
Banavara N. Shashikanth \\
\normalsize Department of Mechanical and Aerospace Engineering \\
\normalsize New Mexico State University \\
\normalsize MSC 3450, P.O. Box 30001 \\
\normalsize 88003-8001 Las Cruces NM, U.S.A. \\
{\footnotesize \texttt{shashi@nmsu.edu}}\\
}

\date{September 15, 2017}
\maketitle

\begin{abstract}
Pure reconstruction phases---geometric and dynamic---are computed in the $N$-point-vortex model in the plane, for the cases $N=3$ and $N=4$. The phases are computed relative to a metric-orthogonal connection on appropriately defined  principal fiber bundles. The metric is similar to the kinetic energy metric for point masses but with the masses replaced by vortex strengths. The geometric phases are shown to be proportional to areas enclosed by the closed orbit on the symmetry reduced spaces. More interestingly, simple formulae are obtained for the dynamic phases, analogous to Montgomery's result for the free rigid body, which show them to be proportional to the time period of the symmetry reduced closed orbits.
For the case $ N = 3 $ a non-zero total vortex strength is assumed.  For the case $ N = 4 $ the vortex strengths are assumed equal.

\bigskip
\noindent
\small{{\textbf{Keywords}}: 3-vortex problem, 4-vortex problem, reconstruction, holonomy, geometric phase, dynamic phase.}
\end{abstract}

\tableofcontents

\section[Introduction]{Introduction:  symmetry and reconstruction \\
phases} 

Reconstruction phases are drifts in symmetry group directions at the end of a periodic evolution on the symmetry reduced phase space/configuration space of a dynamical system.\footnote{The occurrence of the word `phase' in  both `reconstruction phase' and `phase space' is, of course, merely a coincidence.} Mathematically, reconstruction phases are elements of the symmetry Lie group of a dynamical system. These elements typically consist of a {\it dynamic phase}, a {\it geometric phase} and an element that is their group composition termed the {\it total phase}. In the case of a vector group, one thus obtains the paradigm:
\[Total \; phase=Geometric \; phase \; + \; Dynamic \; phase\]
Much of the interest in phases has focused, naturally, on the geometric phase. The geometric phase is determined by the choice of a purely geometric construction on the principal fiber bundle structure of the system's (unreduced) phase space/configuration space, called {\it a connection}. The geometric phase is, in fact, an element of the holonomy subgroup determined by the connection and is therefore also known
 as the holonomy of the connection. It depends on geometric features of the periodic orbit such as, for example, the area enclosed by it, and is independent of dynamic features such as time parametrization of the orbit.

Since the total phase, for a given evolution, is independent of the connection, the choice of the connection is thus equivalent to a choice of the above splitting of the total phase into a geometric and a dynamic part. Typically, however, in many systems of interest there exists a naturally occurring connection. For example, in simple mechanical systems the system kinetic energy equips configuration space with a metric which defines a connection via metric orthogonality.

Despite (a) the fact that the notion of holonomy in differential geometry has existed independently of applications to phases, see, for example, Kobayashi and Nomizu \cite{KoNo96}, (b) the well-known existence of at least one classical dynamical system exhibiting phases---the Foucault pendulum---and (c) the existence of Pancharatnam's phase in classical optics  \cite{Pa56, Be87} and the Aharanov-Bohm effect in quantum mechanics \cite{AhBo61}, the interest in phases, in particular geometric phases, was really ignited by Berry's paper \cite{Be84}.  Berry's geometric phase was in the context of parameter-dependent quantum Hamiltonian systems in which the parameters undergo an infinitely slow, or adiabatic, cyclic evolution.\footnote{For this reason perhaps there still exists a misconception that geometric phases require adiabatic evolutions for their existence.}  The relation of Berry's phase to holonomy was clarified by Simon \cite{Si83}. Around the same time Guichardet \cite{Gu1984} proved the existence of a geometric phase in a model of classical molecular dynamics by showing that vibratory motions can lead to net rotations, also pointing out the relation to the `falling cat' problem; see also Iwai \cite{Iw1987}. The most comprehensive treatment of phases and their relation to holonomy, connections and the general subject of dynamical systems with symmetry is the work of Marsden, Montgomery and Ratiu \cite{MaMoRa90}. Following their terminology, we will distinguish between {\it adiabatic phases} and {\it pure reconstruction phases}. The former require adiabatic evolutions and the latter do not.  The lack of the adiabaticity requirement makes phases significantly more interesting since it allows for the possibility of observation or experimental verification. Indeed the Foucault pendulum phase and Pancharatnam's phase are two such examples.

In most papers on phases, little or no attention is paid to obtaining simple exact formulas for the dynamic phase. This is not surprising since the dynamic phase depends on the system's dynamics and for a nonlinear dynamical system such formulas may not even exist. A noteworthy exception is Montgomery's work on phases in the dynamics of a free rigid body where, in addition to a geometric phase formula, a simple exact formula for the dynamic phase is obtained; see \cite{Mo91}. In the framework of Poisson reduction and in the framework of almost K\"{a}hler manifolds, respectively, Blaom \cite{Bl2000} and Pekarsky and Marsden \cite{PeMa2001} have also obtained some general formulas for dynamic and geometric phases. 

This paper focuses on pure reconstruction phases in classical fluid mechanics. In particular, we examine phases in a popular model of coherent vorticity in incompressible, homogeneous, inviscid flows---the $N$-point-vortex model. The cases $N=3$ and $N=4$ are considered.  Adiabatic phases have been computed in this model previously by Newton \cite{Ne94} and by Shashikanth and Newton \cite{ShNe98}, and also in a model of interacting vortex patches by Shashikanth and Newton \cite{ShNe2000}. Pure reconstruction phases have been computed in an axisymmetric model of vortex rings in $\mathbb{R}^3$ by Shashikanth and Marsden \cite{ShMa2003}. To the best of our knowledge, pure reconstruction phases in the $N$-point-vortex model, in particular dynamic phases, have not been computed before.

The main results are summarized below. 

\begin{enumerate}
\item In the three vortex problem, formulas for the geometric and dynamic phase (pure reconstruction phases) are obtained for vortices of arbitrary strengths, assuming that the total vortex strength and the angular impulse are different from zero.

\item In the four vortex problem, formulas for the geometric and dynamic phase (pure reconstruction phases) are obtained for vortices of equal strengths.

\item Simple exact formulas are obtained for the dynamic phase which show that in both cases it is proportional to the time period of the closed periodic orbit on the respective symmetry reduced spaces, analogous to Montgomery's formula for the free rigid body.
\end{enumerate}

The outline of the paper is as follows. In the next section, brief introductions to the $N$-point-vortex model and to the differential geometric and geometric mechanics concepts necessary for understanding reconstruction phases are provided. Sections \ref{sec:phasesThreeVortices} and \ref{sec:phasesFourVortices} are devoted to the computation of phases in the three and four vortex problems, respectively. Section \ref{sec:summary} presents a summary and future directions. The main body of the paper is followed by two appendices on different aspects of the three vortex problem, which are not essential for an understanding of the main results of the paper and may be skipped in a first reading.

\section{Preliminaries}

In this section, some preliminaries related to (i) the $N$-point-vortex model and (ii) the differential geometric  notions of principal fiber bundles, connections and holonomy, are established.

\subsection{The $N$-point-vortex model in the plane.}

The $N$-point-vortex model in the plane is a model for the dynamics of  coherent vorticity in incompressible, homogeneous and inviscid fluid flows governed by Euler's equation in $\mathbb{R}^2$. In this model, it is assumed that the vorticity is concentrated at $N$ points. The vorticity field $\omega$, using the complex representation of the plane, is expressed as the sum of $N$ Dirac delta functions centered on $z_\alpha=x_\alpha+ \mi y_\alpha$, the locations of the point vortices,
\[\omega(z)=\sum_{\alpha=1}^N \Gamma_\alpha \delta(z-z_\alpha),\]
where the $\Gamma_\alpha$'s are the strengths of the  vortices.

We now recount the basic description of the planar $N$-vortex problem as a Hamiltonian system.  For details the reader may consult the comprehensive treatise \cite{Ne2001}.  

\subsubsection*{Equations of motion, Hamiltonian formulation and symmetries.}

The equations of motion of $N$ point vortices are
	\begin{displaymath}
	\dot{ z} _\alpha = \frac{ \mi }{ 2 \pi} \sum _{ \beta \neq \alpha}^N \Gamma _\beta \frac{ z _\alpha - z _\beta }{ |z _\alpha - z _\beta | ^2 }
	\end{displaymath} These equations are derived from Euler's equation in $\mathbb{R}^2$ through its formulation in terms of the vorticity field
$$ \frac{ D \mathbf{\omega} }{ Dt } := \mathbf{ \omega } _t + \mathbf{u} \cdot \nabla \mathbf{\omega } = 0 \,. $$

The equations of motion are equivalent to Hamilton's equations $ \mathbf{ i} _{ X _h } \Omega = d h $ with Hamiltonian
  $$ h = - \frac{ 1 }{ 2 \pi } \sum _{ \alpha < \beta} \Gamma _\alpha \Gamma _\beta \ln | z _\alpha - z _\beta | \;, $$
and symplectic form
\begin{equation} \label{vortexSymplecticForm}
	\Omega _0 (z, w) = - \operatorname{ Im} \sum _{ \alpha = 1} ^n \Gamma _\alpha z _\alpha \bar{ w} _\alpha \;. 
\end{equation}

The Hamiltonian and symplectic form are invariant with respect to the diagonal action of $ SE(2) $ on the phase space of the system identified  with $\mathbb{C}^N$, or $\mathbb{R}^{2N}$, (minus collision points):
\[
  z _i \mapsto \me ^{ \mi \theta} z _i + a
\]
\[
  (\theta, a) \in SE(2) \cong S ^1 \times \mathbb{C}
\]
The $N$-vortex model admits three conserved quantities related to the $ SE(2) $ symmetries and two more, related to time-translation and rescaling symmetries:
\vspace{-1ex}
\begin{center}
\newcommand{\z}{\small}\newcommand{\bz}{\bf\small}
\begin{tabular}{|c|c|c|}
\hline
\bz symbol & \bz name & \bz symmetry \\
\hline\hline
  $ Z _0 $ & \z center of circulation (two quantities) &  \z translational \\
\hline
  $ \Theta _0 $ & \z second moment of circulation\footnotemark & \z rotational \\
\hline
  $ \Psi _0 $ & \z finite part of kinetic energy & \z temporal \\
\hline
  $ V _0 $ & \z virial & \z rescaling \\
\hline
\end{tabular}
\end{center}
\begin{align}
  Z _0&= \Gamma_{\text{tot}} ^{-1} \sum _k \Gamma _k z _k \,, \quad \Theta _0 = \sum _k \Gamma _k | z _k | ^2 \,, \label{eq:circonstants}
\end{align}
\[
  \Psi _0= - \sum _{ n<k} \Gamma _n \Gamma _k \ln | z _n - z _k | \,,
\]
\begin{align}
  V_0 &= \frac1{2i} \sum _k \Gamma _k (\bar{z} _k \dot{z} _k - z _k \dot{ \bar{z}} _k ) = \sum _{ n < k} \Gamma _n \Gamma _k \,. \label{eq:virial}
\end{align}

\footnotetext{In the point vortex literature, this is also referred to as the \emph{angular impulse}.}

The expression $$ M = \sum _{ n < k} \Gamma _n \Gamma _k | z _n - z _k | ^2 $$ is also a conserved quantity.  It is not independent, since it is expressed in terms of $ \Theta_0 $ and $ Z _0 $:
\[
  M = \Gamma_{\text{tot}} \Theta _0 - \Gamma_{\text{tot}} ^2 | Z _0 | ^2 \,,
\]
where \[\Gamma_{\text{tot}}:=\sum_{k} \Gamma_k \,.\]

\subsection{Principal fiber bundles, connections and holonomy.}

In this section, basic differential geometric notions of principal fiber bundles, connections and holonomy are recalled. For a more detailed treatment of these notions, the reader is referred to the classic books by Kobayashi and Nomizu \cite{KoNo96}.  The discussion on connections and holonomy, and their relation to phases, is a synopsis of material from the texts \cite{Ma92, MaMoRa90}; see also  \cite{ChJa2004}.

Consider $G$ a compact Lie group acting freely on a manifold $E$,
\[
  \varphi : G \times E \longrightarrow E \,.
\]
The standard shorthand $ g \cdot p := \varphi _g (p) := \varphi(g,p) $ will be used.
Let $ B = E/G $, the space of group orbits.  Then $B$ is a smooth manifold and the canonical projection
	\[
	  \pi : E \longrightarrow B
	\]
is a smooth submersion.  We call $E$ a principal fiber bundle over $B$.  
The fibers of $\pi$ are diffeomorphic to $G$.  Moreover, the space $ V:= \ker {\bf D} \pi $ forms a vector subbundle (of the tangent bundle $TE$)  called the vertical subbundle. Elements of $V$ are tangent to the fibers and are called vertical vector fields.

Each $ \xi \in \mathfrak{g} $ (the Lie algebra of $G$), induces a vector field on $E$, the infinitesimal generator of the group action, defined by
	\[
	  \xi _E (p) = \left. \frac{d}{dt} \right| _{ t = 0 } \exp(t \xi) \cdot p \,, \quad p \in E \,.
	\]
Every element of the vertical subbundle $V$ is of this form.

An \defn{Ehresmann connection} on $E$ is a smooth subbundle $H$ of $ TE $ such that
	\[
	  T E = H \oplus V \,.
	\]
If the condition 
\begin{align}
	  H _{ (g \cdot p ) } &= d \varphi _g \cdot H _p      \label{eq:equiv} 
\end{align}
is satisfied then $H$ is a \defn{principal connection}.  Associated to an Ehresmann connection we have the vertical and horizontal operators:
	\[\begin{split}
	  \operatorname{ Ver} &: T _p E \longrightarrow V _p \\
	  \operatorname{ Hor} &: T _p E \longrightarrow H _p
	\end{split}\]
defined by the splitting
	\[
	  v _p = \operatorname{ Ver} (v _p) + \operatorname{ Hor} (v _p) \,.
	\]

The $ \operatorname{ Ver} $ operator induces a $\mathfrak{g}$-valued one-form $\alpha$ on $E$ defined by
	\begin{equation} \label{ConnectionProperties}
	  \alpha (\operatorname{ Ver}(v _p)) = \xi, \quad  \alpha (\operatorname{ Hor}(v _p)) = 0 \,, 
	\end{equation}
where $\xi _E (p) = \operatorname{ Ver} (v _p )$.  If $H$ is a principal connection, the connection one-form also satisfies
\begin{align}
	(\varphi_{g}^* \alpha)(v_p)=\operatorname{Ad}_g (\alpha(v_p)) \,, \label{eq:equiv1form} 
\end{align}
which is the equivalent of the equivariance condition~(\ref{eq:equiv}).

\subsubsection*{Holonomy}

We say that a curve $c_b$ on $B$ is lifted to a curve $ \bar{c} $ on $E$ if $ \bar{c} \, {} ' (t) $ is horizontal and $ \pi (\bar{c}(t)) = c_b(t) $  for all $t$.   An element on the fiber over $ c_b(t) $ is \defn{parallel transported} along horizontal lifts.

Given a closed curve (a \emph{loop})\footnote{Assume that time has been rescaled so that a period occurs over the time interval $ [0,1] $.} on the base, $ c_b: [0,1] \longrightarrow B $, the \defn{holonomy} of the connection at $ p = c_b(0) = c_b(1) $ is the operator $ \operatorname{ Hol} : F _p \longrightarrow F _p $ induced by parallel transport, where $ F _p := \pi ^{-1} (p) $.  When $G$ is abelian, the holonomy of a loop on $B$ is (cf. \cite{MaMoRa90}):
\begin{align}\label{eq:gphasegen}
	  g(1) &:= \exp \left[ - \int _0^1 (\sigma ^\ast \alpha )(c_b'(s))ds \right]= \exp \left[ - \iint \sigma ^\ast d \alpha \right] \,, 
\end{align}
where $\sigma: U \subset B \rightarrow E$ is a local section ($c_b \subset U$). The second equality, which follows from Stokes' theorem, holds only if the loop $c$ is the boundary of a two-dimensional submanifold of $B$. For Abelian groups, $\operatorname{Ad}_g=\operatorname{Id}$, and the equivariance condition for the connection 1-form ensures that the definition of $g(1)$ is independent of the choice of section.

\subsubsection*{Reconstruction phases}

A given (arbitrary) connection on a principal fiber bundle allows one to solve the \emph{dynamic reconstruction problem} (i.e. the opposite of reduction) for a given curve in $B$.  For Hamiltonian systems with symmetry the analysis proceeds as follows.  Let $ E = \mathcal{P} $, where $ (\mathcal{P}, \omega) $ is a symplectic manifold on which the Lie group $G$ acts on the left by canonical transformations in a Hamiltonian fashion, so that it induces an equivariant momentum map $ \mathbf{J} : \mathcal{P} \longrightarrow \mathfrak{g}^{\ast} $.  (See \cite{MaRa99} for background on momentum maps and related concepts.)  Further, suppose that $ \mu \in \mathfrak{g}^{\ast} $ is a regular value of $\mathbf{J}$, so that $ \mathbf{J} ^{-1}(\mu) $ is a smooth manifold, and that $ G _\mu $ (the isotropy subgroup of $\mu$) acts freely on $ \mathbf{J} ^{-1} (\mu) $.  Then $ \mathbf{J} ^{-1} (\mu) \longrightarrow \mathcal{P} _\mu $ is a $ G _\mu $-bundle with base space
\begin{equation} \label{JmuPmuGmu}
	\mathcal{P} _\mu := \mathbf{J} ^{-1} (\mu) / G _\mu \,,
\end{equation} 
which is called the \defn{reduced phase space}.

\begin{remark}\label{not_equivariant_momentum_map}
	The symmetry group for the $N$-vortex problem is $ SE(2) $ and it does not admit an equivariant momentum map unless $ \Gamma_{\text{tot}}=0 $ (see \cite{AdRa1988}).  However, for the case $\Gamma_{\text{tot}} \neq 0$, a \emph{reduction by stages} can be performed where one first fixes the center of circulation at the origin (this center is guaranteed to exist when $\Gamma_{\text{tot}} \neq 0$).  The remaining symmetry is encoded by $ SO(2) $, and its associated momentum map is indeed equivariant.
\end{remark}

Let $ H \in C ^{\infty} (\mathcal{P}) $ (the \emph{Hamiltonian function}) be $G$-invariant.  Then, by the \emph{Marsden-Weinstein reduction theorem} \cite{MaWei74}, the Hamiltonian system $ (\mathcal{P}, \omega, H) $ induces a \emph{reduced} Hamiltonian system $ (\mathcal{P} _\mu, \omega _\mu, H _\mu) $, with $ \mathcal{P}_\mu $ given by \eqref{JmuPmuGmu}.  (See \cite{Ma92} for a concise exposition of symplectic reduction.)

Now, suppose that $ c _\mu : I \longrightarrow \mathcal{P}_\mu $, where $ I \subset \mathbb{R}^+$, is a solution trajectory starting at $c_{\mu 0}$ of the Hamiltonian vector field associated to $ H _\mu $; i.e. it solves the initial value problem
\[
	\frac{d}{dt} c_\mu (t) = X _{ H_\mu } (c _\mu) 
\]
with $c_{\mu}(0)=c_{\mu 0}$.
 One then wishes to find the solution trajectory $ c:I \longrightarrow \mathcal{P}$ starting at $c_0$ of the original Hamiltonian vector field $H$, i.e. solving the initial value problem
\begin{equation} \label{originalEqMot}
	\frac{d}{dt} c(t) = X _H (c) \,, 
\end{equation}
with $c(0)=c_0$, and such that $\pi_\mu(c_\mu(t))=c(t), \forall t \in I$. This is the \emph{reconstruction problem}.

Given an arbitrary principal connection $\alpha$ on $ \pi_\mu $, the reconstruction problem can be solved in two steps.  First, a horizontal lift $ \bar{c} (t) $ of $ c _\mu (t) $ is found, with $ \bar{c} (0)=c_0$.  Second, the $\mathfrak{g}$-valued function $ \xi(t):= \alpha (X _H (\bar{c} (t))) $ is computed.  If $ g(t) $ is the solution to the differential equation 
\begin{align}
 \dot{g} (t)&= {\bf D}_e L_g (\xi (t)), \quad g(0) = e, \label{eq:gdot}
\end{align}
where $L_g: G \rightarrow G$ is left translation by $g$ and ${\bf D}_e L_g$ is its derivative map at the identity element $e$, 
 then the trajectory 
\begin{align}
	c(t)&:= g(t) \bar{c} (t) \label{eq:reconstraj}
\end{align}
solves the reconstruction problem.  That is to say, $ c(t) $ is a solution of \eqref{originalEqMot}.

The objective, however, is not to solve the reconstruction problem. The reduced system on $\mathcal{P}_\mu$ is, in most cases, still nonlinear and finding a solution trajectory $c_\mu(t)$ can prove to be difficult or even impossible. However, restricting to closed trajectories $ c _\mu : [0, 1] \longrightarrow \mathcal{P}_\mu $, the above reconstruction ideas, applied at the initial point $c_{\mu 0}$ (or point of return), naturally lead to the concepts of geometric phase and dynamic phase. Hence the terminology `reconstruction phases'. 

At the end of period 1, both  $\bar{c}(t)$ and $c(t)$ return to the fiber over $c_{\mu 0}$, thus providing the relations
\[\bar{c}(1)=\bar{g}(1) \cdot \bar{c}(0), \quad c(1)=h(1) \cdot c(0), \quad \bar{g}(1),h(1) \in G_\mu \,. \] 
The element $ \bar{g}(1)$ is the \defn{geometric phase} and is computed using~(\ref{eq:gphasegen}) when $G_\mu$ is Abelian. The element $ h(1) $ is the \defn{total phase}. Combining the above and~(\ref{eq:reconstraj}) gives another relation 
\[h(1)=g(1) \, \bar{g}(1), \quad g(1) \in G_\mu \,.\]
The element $g(1)$ is the \defn{dynamic phase}.

The dynamic phase is computed, in principle, by integrating~(\ref{eq:gdot}).  For $G_\mu$ Abelian, the task is somewhat simplified as follows. The equivariance condition~(\ref{eq:equiv1form}) becomes $\phi_g^* \alpha=\alpha$ and the $G_{\mu}$-symmetry of  $X_H$ (${\bf D} \varphi_g  \cdot X_H(p)=X_H(\varphi_g \cdot p)$) implies 
\begin{align}
\xi(t)&:= \alpha (X _H (\bar{c} (t)))=\alpha \left(X_H(c(t))\right) \label{eq:xitab}  \,. 
\end{align} 
Moreover,  the exponential map $\operatorname{exp}: \mathfrak{g}_{\mu} \rightarrow G_{\mu}$ is onto \cite{Ma92} so that at each $t$ there exists $\eta(t) \in \mathfrak{g}_{\mu}$ such that $\operatorname{exp}(\eta(t))=g(t)$. Differentiate the exponential map at $\eta$ with $g$ fixed to obtain the linear map $\mathbf{D}_\eta \operatorname{exp}: T_\eta \mathfrak{g} \equiv \mathfrak{g} \rightarrow T_gG$ and the equation 
 \[\mathbf{D}_\eta \operatorname{exp} (\eta'(t))=\left({\bf D}_{e} L_{g}\right) \xi(t) \,. \] The above is an equation between derivatives of maps and holds irrespective of the base points $(\eta,g)$. In particular, applied at $(0,e)$, one obtains 
\[\frac{d \eta}{dt}=\xi(t)\] (see Warner \cite{Wa83}, for example, for the proof that $\mathbf{D}_\eta \operatorname{exp}$, at $\eta=0$, is the identity). The general formula for the dynamic phase, for Abelian $G_\mu$,  is therefore
\begin{align}
g(1)&= \operatorname{exp}(\eta(1))=\operatorname{exp} \left(\int_0^1 \xi(t) dt \right), \label{eq:dphasegen}
\end{align}
after noting that $g(0)=e$ and $\eta(0)=0$. Obtaining a simple exact expression from the above depends on the form of $\xi(t)$ and this, in turn, depends on the form of the Hamiltonian $H$. It turns out, as will be shown later, that in the three-and-four vortex problems the Hamiltonian $H$ does indeed have a special form which results in a remarkably simple expression for the dynamic phase.

\paragraph{Observation.}  The notation for curves/trajectories introduced in this section is consistently used in the rest of the paper. Namely, the letter $c$ is used, (i) with an appropriate subscript  for curves on symmetry reduced spaces, (ii) with a bar on top for horizontal curves and (iii) with no embellishments for reconstructed trajectories.

\section{Reconstruction phases for the 3-vortex problem.}
\label{sec:phasesThreeVortices}

In this section we describe a sequence of three canonical transformations that implement symplectic reduction of the three vortex problem and proceed to compute the geometric and dynamic phases. It is assumed throughout that $ \Gamma_{\text{\rm tot}} = \Gamma _1 + \Gamma _2 + \Gamma _3 $ (the total vortex strength) is non-zero. 

\subsection{Transformation to Jacobi-Bertrand-Haretu coordinates}
\label{sec:jacobiCoordinates}

Let us introduce the following sequence of canonical transformations.  

Define $ T _1 : \mathbb{C} ^3 \longrightarrow \mathbb{C} ^3 $ by
\[
	(z _1, z _2, z _3) \mapsto (Z _0, r, s)
\]
with
\begin{equation} \label{firstCanTrans}
\begin{split}
	Z _0 &= \frac{ 1 }{ \Gamma_{\text{\rm tot}} } \sum _{ j = 1 } ^3 \Gamma _j \, z_j  \quad \text{(center of circulation)} \,, \\
	r &= z _2 - z _1 \,, \\
	s &= z _3  -  \frac{ \Gamma _1 \, z _1 + \Gamma _2 \, z _2 }{ \Gamma _1 + \Gamma _2 }  \,. 
\end{split}
\end{equation}

It is verified that $ T _1 $ is invertible (in fact, $ \det T _1 = 1 $).  Transformation $ T _1 $ accomplishes reduction to the center of circulation frame by setting $ Z _0 $ at the origin.  Vectors $r$ and $s$ are the \emph{Jacobi-Bertrand-Haretu} (JBH) coordinates of the system.  (See \cite{Sch2002} for an iterative method for constructing JBH coordinates.)

In agreement with \eqref{vortexSymplecticForm}, let $ \Omega _0 $ denotes the symplectic form for the three-vortex problem:
\begin{equation} \label{vortexSymplecticFormReal}
	\Omega _0 := \Gamma _1 \, d x _1 \wedge d y _1 + \Gamma _2 \, d x _2 \wedge d y _2 + \Gamma _3 \, d x _3 \wedge d y _3 \,, 
\end{equation} 
with $ z _k = x _k + \mi y _k $.  The symplectic form after $ T _1 $ becomes
\begin{displaymath}
  \Omega _1 \stackrel{\text{\tiny\rm def}}{=} T _1 ^\ast \Omega _0 = \Gamma_{\text{\rm tot}} \, d Z _{ 0x } \wedge d Z _{ 0y } + A  \, d r _x \wedge d r _y + B \, d s _x \wedge d s _y \,,
\end{displaymath}
where
\begin{equation} \label{jacobiVectorsCoefficients}
	A := \frac{ \Gamma _1 \Gamma _2 }{ \Gamma _1 + \Gamma _2 }\,, \quad
	B := \frac{ (\Gamma _1 + \Gamma _2) \Gamma _3 }{ \Gamma _1 + \Gamma _2 + \Gamma _3 } \,,
\end{equation}
and the subindices $x$ and $y$ indicate real and imaginary parts.

\begin{remark}
Apart from the coefficients $ \Gamma_{\text{\rm tot}}, A, B $, which depend on the vortex strengths, $ \Omega _1 $ retains the basic structure of $ \Omega _0 $, namely that conjugate coordinates are the real and imaginary parts of each complex coordinate.  This is the distinctive feature of using JBH coordinates.
\end{remark}

\begin{remark}
If $ \Gamma_{\text{\rm tot}} = 0 $ then $ Z _0 $ is not defined and thus we do not have JBH coordinates.  This is one reason for which we restrict ourselves to the case of non-zero total vortex strength.  Apart from this, it should be noted that when $ \Gamma_{\text{\rm tot}} = 0 $ the coadjoint isotropy subgroup is the full $\operatorname{SE}(2)$ group, which is not Abelian, and hence formula \eqref{eq:gphasegen} can no longer be applied.
\end{remark}
  
\subsection{Transformation to action-angle coordinates}
\label{sec:AACoords}

Define $ T _2 : \mathbb{C} ^3 \longrightarrow \mathbb{R}^2 \times \mathbb{R} ^2 \times \mathbb{T} ^2 $ by
\[
	(Z _0, r, s) \mapsto (K _x, K _y \,;\; j _1, j _2 \,;\; \theta _1, \theta _2)
\]
with
\[
	Z _0 = {\textstyle\frac1{\sqrt{\Gamma_{\text{\rm tot}}}}} \left( K _x + \mi K _y \right)
\]
and
\begin{equation} \label{jacobiVectors}
	r = \frac{ \sqrt{2 j _1} \, e ^{ \mi \theta _1 } }{ \sqrt{A} }
	\,, \quad
	s = \frac{ \sqrt{ 2 j _2} \, e ^{ \mi \theta _2 }}{ \sqrt{B} }
\end{equation}
with $A$ and $B$ as in \eqref{jacobiVectorsCoefficients}.  Hence
\begin{equation} \label{signsj1j2}
	\sign(j _1) = \sign(A) \quad \text{and} \quad \sign(j _2) = \sign(B) \,.
\end{equation}

Transformation $ T _2 $ is akin to passage to polar coordinates of $r$ and $s$.  
The coefficients $ 1/\sqrt{A} $ and $ 1/\sqrt{B} $ accomplish the task of cancelling-out the coefficients accompanying the terms in $ \Omega _1 $, so that we get the standard symplectic form:
\[
  \Omega _2 \stackrel{\text{\tiny\rm def}}{=} T _2 ^\ast \Omega _1 = d K _x \wedge d K _y + d j _1 \wedge d \theta _1 + d j _2 \wedge d \theta _2 \,. 
\]
Reduction by translational symmetry is achieved by setting the center of circulation at the origin: $ K _x = K _y = 0 $.

\subsection{Reduction by rotational symmetry}
\label{sec:rotationalSymmetry}

After reducing by translational symmetry ($ K _x = K _y = 0 $), the hamiltonian depends on $ \theta _1 $ and $\theta _2 $ only through their difference.  In order to have the (semi-) sum and difference of $ \theta _1 $ and $ \theta _2 $ as coordinates, we introduce a third canonical transformation $ T _3 : \mathbb{R}^2 \times \mathbb{R} ^2 \times \mathbb{T} ^2 \longrightarrow \mathbb{R}^2 \times \mathbb{R} ^2 \times \mathbb{T} ^2 $ by
\[
  (K _x, K _y \,;\; j _1, j _2 \,;\; \theta _1, \theta _2) \mapsto (\tilde{K} _x, \tilde{K} _y \,;\; I _1, I _2 \,;\; \varphi _1, \varphi _2)
\]
through the (type II) generating function
\[
  G _2 = K _x \tilde{K} _y + j _1 (\varphi _2 - \varphi _1) + j _2 (\varphi _1 + \varphi _2)
\]
with the relations
\[
  \theta _k = \frac{ \partial G _2 }{ \partial j _k } \,, \quad I _k = \frac{ \partial G _2 }{ \partial \varphi_k } \,, \quad k = 1, 2
\]
and $ K _y = \partial G _2 / \partial K _x $, $ \tilde{K} _x = \partial G _2 / \partial \tilde{K} _y $.  That is to say,
\[
	(\tilde{K} _x, \tilde{K} _y) = (K _x, K _y) \,,
\]
and
\begin{align}
  \label{coordsT3I1I2}
  I _1 &=  j _2 - j _1 \,,  \hspace{-3em}&\hspace{-3em}  I _2 &= j _1 + j _2 \,, \\
  \label{coordsT3Phi1Phi2}
  \varphi _1 &= \frac{ \theta _2 - \theta _1 }{ 2 } \,,  \hspace{-3em}&\hspace{-3em}  \varphi _2 &= \frac{ \theta _1 + \theta _2 }{ 2 } \,. 
\end{align}
Note that, being half the angle between vectors $r$ and $s$, $ \varphi _1 \in [0, \pi] $.

Since $ T _3 $ is defined through a generating function, 
$ \Omega _3 := T _3 ^\ast \Omega _2 $ is again the standard symplectic form
\[
  \Omega _3 = d \tilde{K} _x \wedge d \tilde{K} _y + d I _1 \wedge d \varphi _1 + d I _2 \wedge d \varphi _2 \,. 
\]

Let $ P _0 $ represent all vortex configurations (including collisions) for which the center of circulation is at the origin:
\[
  P _0 \stackrel{\text{\tiny def}}{=} \{ (z _1, z _2, z _3) \in \mathbb{C} ^3 \mid \sum \Gamma _i z _i = 0 \} \,. 
\]
Then $ P _0 $ is parametrized by the canonical coordinates $ ( I _1, I _2, \varphi _1,\varphi _2 ) $, with $ (\tilde{K} _x, \tilde{K} _y) = (0, 0) $.

Note that $ \varphi _2 $ keeps track of rigid rotations of the vortex configuration.  In other words, $ SO(2) $ acts on $ P _0 $ through the rule
\begin{equation} \label{thetaAdditiveAction}
  \theta \cdot (I _1, I _2, \varphi _1, \varphi _2) = (I _1, I _2, \varphi _1, \varphi _2 + \theta) \,. 
\end{equation}
Thus, at every $ p \in P _0 $, the vertical space is
\[
  V _p = \operatorname{span} \left( \frac{ \partial }{ \partial\varphi _2 } \right) \,. 
\]

Since the Hamiltonian is cyclic in $ \varphi _2 $, then $ I _2 $ is a constant of motion.  Moreover,
\[
	\mathbf{i} _{ \partial / \partial \varphi _2 } \Omega _3 = - d I _2 \,, 
\]  and so the real-valued function defined as $\mathbf{J}(I _1, I _2, \varphi _1, \varphi _2)=-I _2 $ is the momentum map of the $\operatorname{SO}(2)$ action.  Since $ \partial / \partial \varphi _2 = \partial  \left( e ^{ i \theta } z _1, e ^{ i \theta } z _2, e ^{ i \theta } z _3 \right) / \partial \theta = \sum _k (-y _k \, \partial / \partial x _k + x _k \, \partial / \partial y _k ) $, 
it follows that \begin{align}
 I _2 &= \frac{\Theta_0}{2}= -\mathbf{J}(I _1, I _2, \varphi _1, \varphi _2)= - \mu \,,  \label{eq:I2etc}
\end{align}
where $\Theta_0$ is the second moment of circulation~(\ref{eq:circonstants}). The symbol $\mu$, introduced earlier, is adopted henceforth for the image of $\mathbf{J}$ since it is standard notation. Throughout we will assume $ \mu \neq 0 $.

With reference to~(\ref{JmuPmuGmu}), consider now the bundle $ \pi _\mu : \mathbf{J} ^{-1} (\mu) \longrightarrow \mathcal{P} _\mu $, where the symplectic reduced space
\[
  \mathcal{P} _\mu \stackrel{\text{\tiny def}}{=} \mathbf{J}^{-1} (\mu) \, \big{/} \, SO(2)
\]
is parametrized by $ (I _1, \varphi _1) $.  Notice that $ \mu ^2 - I _1 ^2 = 4 j _1 j _2 $, so that whether $ I _1 $ remains bounded depends on $ \sign( j _1 j _2) $; we elaborate on this below.  Notice also that, regardless of the value of $ \varphi _1 $, $ I _1 = - \mu $ ($I _1 = + \mu $) corresponds to $ j _1 = 0 $ ($ j _2 = 0 $), which in turn corresponds to a binary collision of vortices $ \Gamma _1 $ and $ \Gamma _2 $ (a collinear configuration with $ \Gamma _3 $ at the center of circulation of $ \Gamma _1 $ and $ \Gamma _2 $).

\begin{figure}
\includegraphics[scale=0.37]{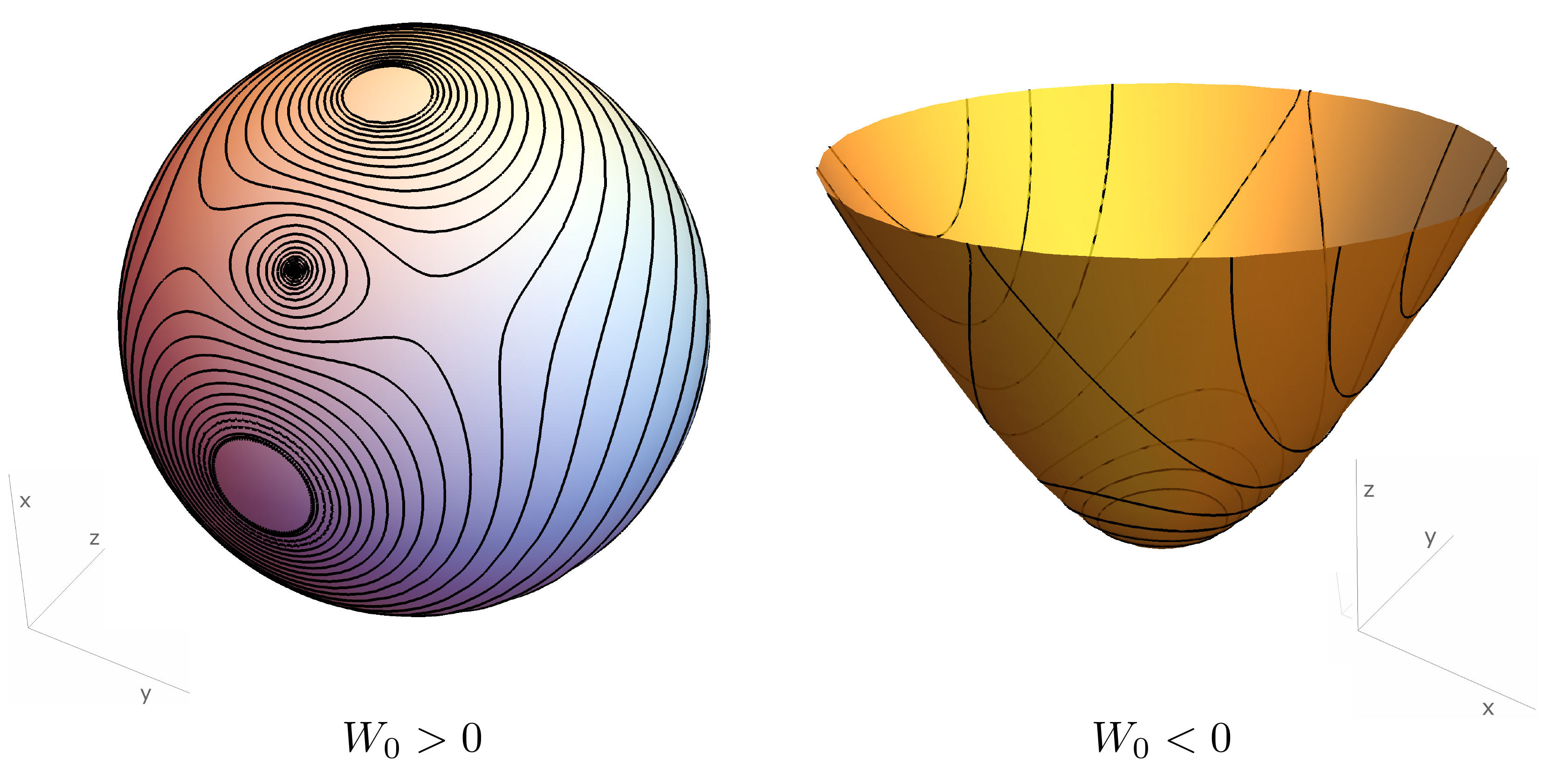}
\caption{\label{fig:reducedspaces}Symplectic reduced spaces $ \mathcal{P} _\mu $ for (a) $ W _0 > 0 $ and (b) $ W _0 < 0 $.    For both cases $ | \mu | = 1 $.  Flow lines correspond to (a) $ \Gamma _1 = 7.615 $, $ \Gamma _2 = -3.46 $, $ \Gamma _3 = -3.155 $ and (b) $ \Gamma _1 = -1 $, $ \Gamma _2 = -1 $, $ \Gamma _3 = 1 $.}
\end{figure}

Let us now look at how the boundedness of the dynamics depends on the vortex strengths.  From \eqref{jacobiVectorsCoefficients} and \eqref{signsj1j2},
\[
	\sign(j _1 j _2) = \sign(AB) = \sign \left( \frac{ \Gamma _1 \Gamma _2 \Gamma _3 }{ \Gamma _1 + \Gamma _2 + \Gamma _3 } \right) = \sign (W _0)
\] 
where 
\[
	W _0 \stackrel{\text{def}}{=} \frac1{\Gamma _1 \Gamma _2} + \frac1{\Gamma _2 \Gamma _3} + \frac1{\Gamma _3 \Gamma _1} \,. 
\] 
Therefore, $ \mathcal{P} _\mu $ is bounded precisely when $ W _0 > 0 $.  In this case $ \mathcal{P} _\mu $ is identified with $ S ^2 _\mu $, the sphere of radius $ |\mu| $
\begin{equation} \label{spheremu}
	S ^2 _\mu : \quad  x ^2 + y ^2 + z ^2 = \mu ^2 \,, 
\end{equation}
parametrized by
\begin{equation}\label{cyl_coords_sphere}
	(x,y,z) = \left( \sqrt{ \mu ^2 - I _1 ^2 } \, \cos 2 \varphi _1, \sqrt{ \mu ^2 - I _1 ^2 } \, \sin 2 \varphi _1, \; I _1 \right) \,. 
\end{equation}
If $ W _0 < 0 $ then $ \mathcal{P}_\mu $ is identified with the two-sheeted hyperboloid $ H _\mu $ given by
\begin{equation} \label{hyperboloidmu}
	H ^2 _\mu : \quad  z ^2 - x ^2 - y ^2 = \mu ^2 \,, 
\end{equation}
parametrized by
\[
	(x,y,z) = \left( \sqrt{ I _1 ^2 - \mu ^2 } \, \cos 2 \varphi _1, \sqrt{ I _1 ^2 - \mu ^2 } \, \sin 2 \varphi _1, \; I _1 \right) \,. 
\]
In both cases $ (I _1, 2 \varphi _1) $ are \emph{cylindrical coordinates} of $ \mathcal{P} _\mu $.  We will refer to $ \{ (0, 0, -\mu), (0, 0, \mu) \} \subset \mathcal{P} _\mu $ as the \emph{singular points} of the coordinate chart.

Figure \ref{fig:reducedspaces} shows the hamiltonian flow lines on the symplectic reduced space $ P _\mu $ for particular choices of the vortex strengths corresponding to $ W _0 > 0 $ and $ W _0 < 0 $.

Let us refine the description of $ \mathcal{P} _\mu $ in the non-compact case $ W _0 < 0 $.  There are two subcases:  either (i) $ A < 0 $ and $ B > 0 $ or (ii) $ A > 0 $ and $ B < 0 $.  Assume that $ \mu > 0 $.  It is easy to see that the dynamics takes place in the upper or lower sheet of the hyperboloid depending on whether (i) or (ii) holds.  If $ \mu < 0 $ then the correspondence is reversed.  Thus, the reduced space $ P _\mu $ is actually one of the two sheets of the hyperboloid.

\begin{remark}
	The classification of the dynamics in the compact and non-compact cases is discussed in \cite{BoLe98}, using $ W _0 $ as classification parameter.  Interestingly, $ \mathcal{P} _\mu $ is obtained as a coadjoint orbit in $ \mathfrak{u} (2) ^\ast $.  This is shown using the Poisson structure of the 3-vortex problem expressed in terms of symmetry invariants.  A further account of this point of view is found in \cite{He2016a}.
\end{remark}

\subsection{Metric and connection}

The next step is to define a connection on the bundle $ \pi _\mu : \mathbf{J} ^{-1} (\mu) \longrightarrow \mathcal{P} _\mu $.  Consider the Hermitian metric on $ \mathbb{C} ^N $ defined by
\begin{align}
	\big\langle U, V \big\rangle(p)&:= \sum _{ k = 1 } ^n \Gamma _k \, u _k \, \bar{v}_k,  \label{eq:hermitianMetric}
\end{align}
where $p \equiv (z_1, \ldots, z_n) \in \mathbb{C} ^N$, $U \equiv (u_1, \ldots, u_n) \in T_p \mathbb{C} ^N $ and $V \equiv (v_1, \ldots, v_n) \in T_p \mathbb{C} ^N $
with the associated Riemannian metric and symplectic form:
\begin{equation} \label{eq:metricAndSymplecticForm}
		\left\langle \! \left\langle U, V \right\rangle \! \right\rangle \stackrel{\text{\tiny\rm def}}{=} \Re \langle U, V \rangle \,, \quad \Omega _0 (U, V) = -\Im \langle U, V \rangle  \,. 
\end{equation}
Note that this Hermitian metric makes $ \mathbb{C} ^N $ a Hermitian manifold with respect to the complex structure associated to the symplectic form.
Also, note that the symplectic form just defined coincides with \eqref{vortexSymplecticForm}, the symplectic form for the $N$-vortex problem.  It is easy to see that $ \left\langle \! \left\langle \;,\, \right\rangle \! \right\rangle $ is $G$-invariant and that the \emph{angular impulse} takes the form
\[
	\Theta_0(\mathbf{z}) = \frac12 \left\langle \! \left\langle \mathbf{z}, \mathbf{z} \right\rangle \! \right\rangle \,, \quad \mathbf{z} \in \mathbb{C} ^N \,. 
\] 

\begin{remark}
	If $ W _0 < 0 $ then the metric $ \left\langle \! \left\langle \;,\, \right\rangle \! \right\rangle $ is indefinite.  That this is so can be seen from the expression for the metric in $ (r, s) $ coordinates computed below.  This, however, plays no role in the construction of the horizontal bundle and the connection 1-form because, provided that the angular impulse is not zero, the vertical space is definite.
\end{remark}

Let us return to the case $ N = 3 $.  The pullback of the metric to $ \mathbf{J} ^{-1} (\mu) $ induces a connection on $ \pi _\mu : \mathbf{J} ^{-1} (\mu) \longrightarrow \mathcal{P} _\mu $ by identifying the horizontal bundle with the metric-orthogonal to the vertical bundle:
\[
	H _p = V _p ^\perp \stackrel{\text{\tiny\rm def}}{=} \{ w _p \in T _p \mathbf{J} ^{-1} (\mu) \mid \left\langle \! \left\langle w _p, v _p \right\rangle \! \right\rangle = 0 \} \,, 
\] 
where $ v _p = \operatorname{span} \left( \xi _{ J ^{-1} (\mu) } (p) \right) $, for any $ \xi \in \mathfrak{so}(2) $, $ \xi \neq 0 $.

The level set $ \mathbf{J} ^{-1} (\mu) \subset P _0 $ is parametrized by coordinates $ (I _1, \varphi _1, \varphi _2) $, keeping $ I _2 = -\mu $ fixed and $ (\tilde{K} _x, \tilde{K} _y) = (0,0) $.  Also, the action of $ SO(2) $ only affects $ \varphi _2 $, as stated in \eqref{thetaAdditiveAction}.  Thus, the vertical operator associated to the metric-orthogonal connection is given by the following:

\vspace{1ex}
\begin{lemma}
	Let $ \pi : E \longrightarrow B $ be a principal fiber bundle with typical fiber diffeomorphic to the structural group $G$, of dimension one, and where $E$ is a Riemannian manifold.  Suppose that $ (q ^1, \ldots, q ^n) $ are coordinates on $E$ such that $ V _p = \operatorname{span} \left( \partial / \partial q ^n \right) $.  Equip the fiber bundle with the metric-orthogonal connection, i.e. $ H _p = V _p ^\perp $.  Then the connection 1-form is
	\[
	  \alpha = \frac1{g _{nn}} \sum _{ k = 1 } ^{ n-1 } g _{ nk } \, d q ^k + d q ^n \,,
	\] 
	where the $ g _{ ij } $ are the coefficients of the metric in coordinates $ (q ^1, \ldots q ^n) $.
\end{lemma}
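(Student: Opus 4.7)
The plan is to verify the formula by checking directly that the proposed 1-form satisfies the two defining properties of the connection 1-form from~(\ref{ConnectionProperties}): it should equal the generating Lie-algebra element on vertical vectors, and it should vanish on horizontal vectors. Since $\dim G = 1$, we identify $\mathfrak{g} \cong \mathbb{R}$ by letting $\xi_0 \in \mathfrak{g}$ be the unique element satisfying $(\xi_0)_E = \partial/\partial q^n$, so vertical values of $\alpha$ are real numbers expressed as multiples of $\xi_0$ (which we write simply as the number itself).

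First I would verify the vertical condition. Evaluating the proposed
\[
 \alpha = \frac{1}{g_{nn}} \sum_{k=1}^{n-1} g_{nk}\, dq^k + dq^n
\]
on $\partial/\partial q^n$ gives $\alpha(\partial/\partial q^n) = 0 + 1 = 1$, which is exactly $\xi_0$ under our identification. This immediately takes care of the requirement that $\alpha$ reproduce the vertical part.

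Next I would extract a coordinate description of $H_p = V_p^\perp$. Writing $w = \sum_{j=1}^n w^j \, \partial/\partial q^j \in T_pE$, the orthogonality condition $\langle\!\langle w,\, \partial/\partial q^n\rangle\!\rangle = 0$ reads $\sum_{j=1}^n g_{jn}\, w^j = 0$, hence
\[
 w^n = -\frac{1}{g_{nn}} \sum_{k=1}^{n-1} g_{nk}\, w^k.
\]
Substituting into $\alpha(w)$ yields
\[
 \alpha(w) = \frac{1}{g_{nn}} \sum_{k=1}^{n-1} g_{nk}\, w^k + w^n = 0,
\]
so $\alpha$ vanishes on $H_p$ as required.

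Since $\alpha$ is smooth (assuming $g_{nn}$ is nowhere zero, which is guaranteed when the vertical direction is non-null — precisely the caveat flagged in the preceding remark about the angular impulse being nonzero) and satisfies the two characterizing properties, uniqueness of the connection 1-form associated to the splitting $TE = H \oplus V$ finishes the proof. There is no substantive obstacle here: the only thing to be careful about is the identification $\mathfrak{g} \cong \mathbb{R}$ through the chosen generator $\xi_0$, which fixes an overall normalization but does not affect the shape of the formula.
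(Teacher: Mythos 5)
Your proof is correct and follows essentially the same route as the paper's: verify that $\alpha$ returns the generator on the vertical direction and vanishes on the metric-orthogonal complement, using the coordinate form of the orthogonality condition $\sum_k g_{nk}w^k=0$. Your added remarks on the identification $\mathfrak{g}\cong\mathbb{R}$ and the need for $g_{nn}\neq 0$ are sensible refinements but do not change the argument.
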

\emph{Proof:}  We need to verify the properties~(\ref{ConnectionProperties}). Assuming that the action $g \cdot p$, where $g \in G$, is independent of $p \in E$, it is sufficient to verify that \emph{i)} $ \alpha (v) = v $ for all $ v \in V _p $ and \emph{ii)} $ \alpha (w) = 0 $ for all $ w \in H _p $.  The first claim is obvious.  As for the second, suppose that $ w \in H _p $, i.e. $ w = w ^i \partial / \partial q ^i $ with $ \sum _{ k = 1 } ^n g _{ nk } w ^k = 0 $.  Then
\[
	\qquad
\alpha (w) = \frac1{g _{ nn }} \left( \sum _{ k = 1 } ^{ n-1 } g _{ nk } w ^k + g _{ nn } w ^n \right) = \frac1{g _{ nn }} \sum _{ k = 1 } ^n g _{ nk } w ^k = 0 \,.
	\qquad \square
\]

\vspace{1ex}
Applied to the bundle $ \pi _\mu : \mathbf{J} ^{-1} (\mu) \longrightarrow \mathcal{P} _\mu $ the connection 1-form has the expression:
\begin{equation} \label{verticalOperatorEx1}
	\alpha = \frac1{ \scriptstyle \left\langle \! \left\langle \frac{ \partial }{ \partial \varphi _2 }, \frac{ \partial }{ \partial \varphi _2 } \right\rangle \! \right\rangle }
	\left[
		\textstyle
		\left\langle \! \left\langle \frac{ \partial }{ \partial I _1 }, \frac{ \partial }{ \partial \varphi _2 } \right\rangle \! \right\rangle d I _1 
		+ \left\langle \! \left\langle \frac{ \partial }{ \partial \varphi _1 }, \frac{ \partial }{ \partial \varphi _2 } \right\rangle \! \right\rangle d \varphi _1 
	\right]
	+ d \varphi _2 \,. 
\end{equation}
To compute the metric coefficients in coordinates $ (I _i, \varphi _i) $ let us first compute the metric coefficients in coordinates $ (j _i, \theta _i) $.
Let $ \mathbb{T} _1 $ be the matrix representation (with respect to the canonical basis) of the linear transformation $ T _1 : \mathbb{C} ^3 \longrightarrow \mathbb{C} ^3 $ introduced in subsection \ref{sec:jacobiCoordinates}.  A straightforward computation shows that
\[
	\begin{pmatrix} \Gamma _1 &&\\ & \Gamma _2 &\\ && \Gamma _3 \end{pmatrix}
	= \mathbb{T} _1 ^T \begin{pmatrix} \Gamma_{\text{\rm tot}} &&\\ & A &\\ && B \end{pmatrix} \mathbb{T} _1 \,, 
\] 
with $A$ and $B$ defined in \eqref{jacobiVectorsCoefficients}.  It follows that, in coordinates $ (r,s) \in \mathbb{C} ^2 $, the matrix representation of the Hermitian metric restricted to $ P _0 $ is $ \begin{pmatrix} A &\\ & B \end{pmatrix} $.
Moreover, from \eqref{jacobiVectors},
\begin{align*}
	\frac{ \partial }{ \partial j _1 } &= \left( \frac{ \me ^{ \mi \theta _1 } }{ \sqrt{A} \, \sqrt{2 j _1} }, 0 \right) \,, &
	\frac{ \partial }{ \partial j _2 } &= \left( 0, \frac{ \me ^{ \mi \theta _2 } }{ \sqrt{B} \, \sqrt{2 j _2} } \right) \,, \\[1ex]
	\frac{ \partial }{ \partial \theta _1 } &= \left( \frac{ \mi \, \sqrt{2 j _1} \, \me ^{ \mi \theta _1 } }{ \sqrt{A} }, 0 \right) \,, &
	\frac{ \partial }{ \partial \theta _2 } &= \left( 0, \frac{ \mi \, \sqrt{2 j _2} \, \me ^{ \mi \theta _2 } }{ \sqrt{B} } \right) \,. 
\end{align*}
Now, from \eqref{coordsT3I1I2} and \eqref{coordsT3Phi1Phi2}, 	it is easy to compute:
\[\begin{split}
	\left\langle \frac{ \partial }{ \partial I _1 }, \frac{ \partial }{ \partial \varphi _2 } \right\rangle &= \left\langle -\frac12 \frac{ \partial }{ \partial j _1 } + \frac12 \frac{ \partial }{ \partial j _2 }, \frac{ \partial }{ \partial \theta _1 } + \frac{ \partial }{ \partial \theta _2 } \right\rangle = 0 \,, \\
	\left\langle \frac{ \partial }{ \partial \varphi _1 }, \frac{ \partial }{ \partial \varphi _2 } \right\rangle &= \left\langle - \frac{ \partial }{ \partial \theta _1 } + \frac{ \partial }{ \partial \theta _2 }, \frac{ \partial }{ \partial \theta _1 } + \frac{ \partial }{ \partial \theta _2 } \right\rangle = 2 (j _2 - j _1) = 2 I _1 \,, \\
	\left\langle \frac{ \partial }{ \partial \varphi _2 }, \frac{ \partial }{ \partial \varphi _2 } \right\rangle &= \left\langle \frac{ \partial }{ \partial \theta _1 } + \frac{ \partial }{ \partial \theta _2 }, \frac{ \partial }{ \partial \theta _1 } + \frac{ \partial }{ \partial \theta _2 } \right\rangle = 2 (j _1 + j _2) = 2 I _2 \,.
\end{split}\] 
Note that these three expressions are real, and so they give the metric coefficients in coordinates $ I _1, \varphi _1, \varphi _2 $.  Thus, expression \eqref{verticalOperatorEx1} for the vertical operator takes the form $ \operatorname{ver} = (I _1 / I _2) d \varphi _1 + d \varphi _2 $.  Hence, the connection one-form on $\mathbf{J}^{-1} (\mu)$, using~(\ref{eq:I2etc}), is
\begin{equation} \label{conn1form3v}
	\alpha = -\frac{I _1}{ \mu } d \varphi _1 + d \varphi _2 \,.
\end{equation}

\subsection{The geometric phase}

In this subsection we proceed to compute the geometric phase for the three-vortex problem.

\subsubsection*{Geometric phase as a line integral}

Let $ [0, 1] \ni t \mapsto c_\mu(t) = \left( I _1(t), \varphi _1(t) \right) $ be a closed loop in $ \mathcal{P} _\mu $ and assume it does not pass through a singular point of the coordinate chart.  A section $ \sigma : \mathcal{P} _\mu \longrightarrow J ^{-1} (\mu) $ is obtained by fixing the direction of either vector $r$ or $s$ (defined in \eqref{firstCanTrans}), that is to say, by keeping either $ \theta_1 $ or $ \theta _2 $ constant.  From \eqref{coordsT3Phi1Phi2} and \eqref{conn1form3v} we get that the connection 1-form, expressed in $ (I _1, I _2, \varphi _1, \theta _1) $ coordinates, is
\[
	\alpha = \left( 1 - \frac{ I _1 }{ \mu } \right) d \varphi _1 + d \theta _1 
\] 
and thus its pull-back by the section is
\[
	\sigma ^\ast \alpha = \left( 1 - \frac{ I _1 }{ \mu } \right) d \varphi _1 \,. 
\] 
The curvature of $\alpha$ is the 2-form
\[
	D \alpha := d \sigma ^\ast \alpha = - \frac1{ \mu } d I _1 \wedge d \varphi _1 \,. 
\] 
Hence, the geometric phase associated to a horizontal lift $ \bar c(t) = (I _1 (t), \varphi _1 (t)$, $\varphi_2 (t)) $ over $ c _\mu (t) $, computed from \eqref{eq:gphasegen}, is given by
\begin{equation} \label{geometricPhaseDefined}
	\theta _g \stackrel{\text{\tiny\rm def}}{=} - \oint _{ c _\mu } \sigma ^\ast \alpha = - \int _0 ^1 \left( 1 - \frac{ I _1 (t) }{ \mu } \right) \dot{\varphi} _1 (t) \, d t \,, 
\end{equation}
after identifying $\operatorname{exp}: \mathbb{R} \rightarrow \operatorname{SO}(2)$ with the $\operatorname{mod} 2 \pi$ map.

\subsubsection*{Geometric phase as a double integral}

Let us now express the geometric phase in terms of a double integral.  We begin by endowing $ P _\mu $ with an area element:

\begin{definition} \label{def:omegamu}
Let $ \omega _\mu $ be the 2-form on the reduced space $ \mathcal{P}_\mu $ given by
\[
	\omega _\mu := 2 \mu \, d I _1 \wedge d \varphi _1 \,, \quad \mu \neq 0 \,. 
\] 
Give $ \mathcal{P}_\mu $ the orientation induced by $ \omega_\mu $.
\end{definition}

\begin{remark}
Let us comment on the geometric meaning of $ \omega _\mu $.  Endow $ \mathbb{R} ^3 $ with the standard metric $ d s ^2 = d x ^2 + d y ^2 + d z ^2 $ if $ W _0 > 0 $ (the compact case) or the \emph{hyperbolic metric} $ d s ^2 = d z ^2 - d x ^2 - d y ^2 $ if $ W _0 < 0 $ (the non-compact case).  Then, for both cases,
\[
	\mathcal{P} _\mu = \{ (x,y,z) \in \mathbb{R} ^3 \mid \| (x,y, z) \| ^2 = \mu ^2 \} \,, \quad \mu \neq 0 
\] 
with the norm $ \| \; \| $ induced  by the corresponding metric.  It is easy to verify that the area element on $ \mathcal{P} _\mu $ induced by $ d s ^2 $ is, in both cases,
\[
	d A = 2 \, |\mu| \, d z \wedge d \varphi _1 
\] 
with coordinates $ (z, \varphi _1) $ defined according to \eqref{spheremu}--\eqref{hyperboloidmu}.  Thus, if $ \mu > 0 $ and provided that we choose the appropriate metric (or pseudo-metric) on $ \mathbb{R} ^3 $ depending on $ \sign W _0 $, the 2-form $ \omega _\mu $ is the area element of $ P _\mu $ as an embedded surface on $ \mathbb{R} ^3 $.  For a general $ \mu \neq 0 $, the orientation induced by $ \omega _\mu $ depends on $ \sign \mu $.
\end{remark}

Consider a simple closed trajectory $ c _\mu (t) $ on $ \mathcal{P} _\mu $.  Assume initially that $ W _0 > 0 $.  From the two regions having $ c _\mu $ as boundary, let $U$ be the one with respect to which $ c _\mu (t) $ is positively oriented.  Then, by Stokes' theorem,
\begin{align*}
	\theta _g &= - \sign(\mu) \iint _U D \alpha = \frac{ \sign(\mu) }{ \mu } \iint _U d I _1 \wedge d\varphi _1  \\
	&= \frac{1}{2 \mu ^2} \iint _U 2 | \mu | \, d I _1 \wedge d \varphi _1 = \frac{1}{ 2 \mu ^2 }\iint _U \omega _{ |\mu| } 
\end{align*}
where the factor $ \sign(\mu) $ takes care of the fact that the orientations induced by $ \omega _\mu $ and $ d I _1 \wedge d \varphi _1 $ are opposite if $ \mu < 0 $.  To further simplify this expression, let $ U _1 $ be te radial projection of $U$ on the unit sphere.  Since $ \omega _{ | \mu | } = \mu ^2 \omega _1 $, where $ \omega _1 $ is the standard area element on the unit sphere, we obtain
\begin{equation} \label{geometricPhaseSimpleArea}
	\theta _g = \frac{ 1 }{ 2 }  \iint _{ U _1 }  \omega _1 \qquad (\operatorname{mod} 2 \pi) \,. 
\end{equation}
In this simplified formula we have added that the result is to be taken modulo addition of an integer multiple of $ 2 \pi $, since $ \theta _g $ is an angle.

If $ W _0 < 0 $ then there is just one region $U$ whose boundary is $ c _\mu $.  In this case assume that $ c _\mu (t) $ is positively oriented with respect to $U$.  The argument leading to formula \eqref{geometricPhaseSimpleArea} goes through as before, noticing that now $ U _1 $ is the projection of $U$ on the normalized two-sheeted hyperboloid $ H ^2 _1 $ and $ \omega _1 $ is its area form according to definition \ref{def:omegamu}.

In summary:
\begin{quote}
	The geometric phase associated to a closed loop bounding a region $U$ on $ \mathcal{P} _\mu $, positively oriented, equals half the area projected by $U$ on the concentrical unit sphere (for the compact case) or on the normalized two-sheeted hyperboloid (for the non-compact case).
\end{quote}

\begin{remark}
	The reader might object that the application of Stoke's theorem requires that $ c _\mu $ be homotopic to a point within the chart given by coordinates $ (I _1, \varphi _1) $.  For example, when $ \mathcal{P}_\mu $ is a sphere, this fails to be the case if $ c _\mu $ encloses the ``north pole'' of $ \mathcal{P}_\mu $.  In proposition \ref{prop:geometric_phase_JBH_charts} we prove that formula \eqref{geometricPhaseSimpleArea} holds even in this case.  Indeed, the above formula for the geometric phase uses the Jacobi-Bertrand-Haretu coordinates and the canonical transformation $T_1$ as defined by relations~(\ref{firstCanTrans}). However, there is nothing sacrosanct about the sequence in which the vortices were considered, i.e.  first defining the vector from vortex \emph{one} to vortex \emph{two}, and then the vector from their center of circulation to vortex \emph{three}. In Appendix \ref{apdx:JBH_charts} it is shown that the geometric phase formula is essentially unaltered by considering the other two sequences.
\end{remark}

\subsection{The dynamic phase}

The computation of the dynamic phase is presented in this subsection.  The expression for the dynamic phase turns out to be surprisingly simple and the reason for this is rather interesting.

We begin by recalling Leonard Euler's remarkable theorem for homogeneous functions:

\begin{theorem}[Euler 1755]
	If $ f : \mathbb{R} ^n \setminus \{0\} \longrightarrow \mathbb{R} $ is homogeneous of degree $ \alpha \in \mathbb{R} $ (i.e. $ f(\lambda \mathbf{q}) = \lambda ^\alpha f(\mathbf{q}) $, $\mathbf{q} \in \mathbb{R}^n$, for all $ \lambda > 0 $) then
\[
	\mathbf{q} \cdot \nabla f(\mathbf{q}) = \alpha \, f(\mathbf{q}) \,.
\]
\end{theorem}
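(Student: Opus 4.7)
The plan is to exploit the defining identity $f(\lambda \mathbf{q}) = \lambda^{\alpha} f(\mathbf{q})$ by treating both sides as smooth functions of the single real parameter $\lambda > 0$ and differentiating in $\lambda$. Since the identity holds for all $\lambda > 0$ at each fixed $\mathbf{q} \in \mathbb{R}^n \setminus \{0\}$, the derivatives of the two sides with respect to $\lambda$ must coincide.

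First I would apply the chain rule on the left. Writing $\gamma(\lambda) := \lambda \mathbf{q}$, we have $\gamma'(\lambda) = \mathbf{q}$, so
\[
  \frac{d}{d\lambda} f(\lambda \mathbf{q}) = \nabla f(\lambda \mathbf{q}) \cdot \mathbf{q}.
\]
The right side is elementary: $\frac{d}{d\lambda} \lambda^{\alpha} f(\mathbf{q}) = \alpha \lambda^{\alpha - 1} f(\mathbf{q})$. Equating the two derivatives yields
\[
  \mathbf{q} \cdot \nabla f(\lambda \mathbf{q}) = \alpha \lambda^{\alpha - 1} f(\mathbf{q}) \quad \text{for all } \lambda > 0.
\]

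Finally, I would specialize to $\lambda = 1$, which gives the desired relation $\mathbf{q} \cdot \nabla f(\mathbf{q}) = \alpha f(\mathbf{q})$. There is no real obstacle in this argument, provided $f$ is assumed $C^1$ on $\mathbb{R}^n \setminus \{0\}$ so that $\nabla f$ exists and the chain rule applies; if the paper implicitly assumes differentiability (as is standard for Euler's theorem and as will be needed when the theorem is applied to the Hamiltonian $h$), the proof is complete. The only subtlety worth flagging is that the identity must be differentiated as an identity in $\lambda$ (not in $\mathbf{q}$), since this is what converts the scaling law into a pointwise differential relation.
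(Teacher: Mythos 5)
Your proposal is correct and is essentially the paper's own argument: the paper differentiates $f\big((1+t)\mathbf{q}\big)=(1+t)^{\alpha}f(\mathbf{q})$ at $t=0$, which is exactly your differentiation in $\lambda$ followed by evaluation at $\lambda=1$. Your remark about the implicit $C^1$ assumption is a fair (standard) observation but does not change the argument.
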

\pagebreak[3]
\begin{proof}
Assuming $f$ homogeneous of degree $\alpha$,
\[
	\mathbf{q} \cdot \nabla f(\mathbf{q}) = \left. \frac{d}{dt} \right|_{t=0} f \big( (1+t) \mathbf{q} \big) = \left. \frac{d}{dt} \right|_{t=0} (1 + t) ^\alpha f(\mathbf{q}) = \alpha f(\mathbf{q}) \,.
\]
\end{proof}

What will be relevant to us is the

\begin{corollary} \label{cor:Euler}
	If $f$ is homogeneous of degree 1 then $ \mathbf{q} \cdot \nabla \ln(f(\mathbf{q})) = 1 $.
\end{corollary}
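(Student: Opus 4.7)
The plan is to derive the corollary as a direct consequence of Euler's theorem combined with the chain rule applied to the logarithm. The main idea is that taking the logarithm converts the multiplicative scaling behavior of a degree-one homogeneous function into something whose Euler-type derivative is constant.

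First I would apply the chain rule to rewrite $\nabla \ln(f(\mathbf{q}))$ as $\nabla f(\mathbf{q}) / f(\mathbf{q})$, which is legitimate wherever $f(\mathbf{q}) \neq 0$ (the corollary is implicitly asserted on the domain where $f$ is nonzero, which is natural since otherwise $\ln f$ is undefined). Dotting with $\mathbf{q}$ yields
\[
\mathbf{q} \cdot \nabla \ln(f(\mathbf{q})) = \frac{\mathbf{q} \cdot \nabla f(\mathbf{q})}{f(\mathbf{q})}.
\]
Next, I would invoke Euler's theorem (the preceding result) with $\alpha = 1$: since $f$ is homogeneous of degree $1$, the numerator $\mathbf{q} \cdot \nabla f(\mathbf{q})$ equals $f(\mathbf{q})$. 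Substituting gives the ratio $f(\mathbf{q})/f(\mathbf{q}) = 1$, establishing the claim.

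There is essentially no obstacle here; the only subtlety worth flagging is the domain issue (one needs $f(\mathbf{q}) \neq 0$ so that $\ln f$ and the quotient make sense), but this is implicit in the statement. The proof is a two-line application of the chain rule followed by Euler's theorem, so I would present it compactly without further elaboration.
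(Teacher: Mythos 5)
Your proof is correct and is essentially identical to the paper's: both apply the chain rule to write $\mathbf{q}\cdot\nabla\ln(f(\mathbf{q})) = \bigl(\mathbf{q}\cdot\nabla f(\mathbf{q})\bigr)/f(\mathbf{q})$ and then invoke Euler's theorem with $\alpha = 1$. Your remark about the domain where $f\neq 0$ is a reasonable (and implicit) clarification, but nothing more needs to be said.
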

\begin{proof}
	$ \mathbf{q} \cdot \nabla \ln (f(\mathbf{q})) = \left( \mathbf{q} \cdot \nabla f(\mathbf{q}) \right) / f(\mathbf{q}) = 1 $.
\end{proof}

From \eqref{firstCanTrans} it is clear that, given $ \lambda > 0 $, the rescaling $$ (Z _0, r, s) \mapsto (\lambda ^{ 1/2 } Z _0, \lambda ^{ 1/2 } r, \lambda ^{ 1/2 } s) $$ implies the rescaling $ z _i \mapsto \lambda ^{ 1/2 } z _i $, $ i = 1,2,3 $, which in turn implies the rescaling
\begin{equation} \label{sidesSqRescaling}
	b _i \mapsto \lambda b _i \,, \quad i = 1, 2, 3 \,.
\end{equation}
(Recall that $ b _i = | z _j - z _k | ^2 $, $ (i,j,k) \in C _3 $.)  In particular, if the center of circulation $ Z _0 $ is pinned at the origin, the rescaling $ (r, s) \mapsto (\lambda ^{ 1/2 } r, \lambda ^{ 1/2 } s) $ implies \eqref{sidesSqRescaling}.  Furthermore, from \eqref{jacobiVectors} and \eqref{coordsT3I1I2}, we see that the rescaling $ (I _1, I _2) \mapsto (\lambda I _1, \lambda I _2) $ implies \eqref{sidesSqRescaling} as well.  In other words,

\begin{proposition} \label{prop:sidesSqRescaling}
	With the center of circulation kept at the origin, $ b _i $ is a homogeneous function of the variables $ I _1, I _2 $ of degree one, for $ i = 1, 2, 3 $.
\end{proposition}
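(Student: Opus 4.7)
The proof is essentially bundled into the paragraph preceding the proposition, so the plan is mostly to assemble those observations into a clean chain of implications showing that each $b_i$ scales linearly under a rescaling of $(I_1, I_2)$.

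My plan is to start from the target side and work backwards. The goal is to show that if one sends $(I_1, I_2) \mapsto (\lambda I_1, \lambda I_2)$ for $\lambda > 0$, then $b_i \mapsto \lambda b_i$. Using the relations \eqref{coordsT3I1I2}, i.e.\ $I_1 = j_2 - j_1$ and $I_2 = j_1 + j_2$, this rescaling is equivalent to $(j_1, j_2) \mapsto (\lambda j_1, \lambda j_2)$. Substituting into \eqref{jacobiVectors} and noting that the angles $\theta_1, \theta_2$ and the coefficients $A, B$ are unchanged, one gets $r \mapsto \lambda^{1/2} r$ and $s \mapsto \lambda^{1/2} s$.

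Next I would invoke the transformation $T_1$ from \eqref{firstCanTrans}. With the center of circulation pinned at the origin, $Z_0 = 0$, so the inverse of $T_1$ expresses each $z_i$ as a fixed linear combination of $r$ and $s$ alone. Hence the rescaling $(r,s) \mapsto (\lambda^{1/2} r, \lambda^{1/2} s)$ forces $z_i \mapsto \lambda^{1/2} z_i$ for $i=1,2,3$, and consequently $b_i = |z_j - z_k|^2 \mapsto \lambda\, |z_j - z_k|^2 = \lambda b_i$. This is exactly the claim of homogeneity of degree one in $(I_1, I_2)$.

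There is essentially no obstacle: all that is needed is to confirm that the pinning $Z_0 = 0$ is preserved by the rescaling (it is, because $Z_0$ depends linearly on the $z_i$ and $\lambda^{1/2} \cdot 0 = 0$), and that nothing else in the chain of coordinate changes depends on the scale. The only minor subtlety worth mentioning is that $\lambda$ must be positive so that $\sqrt{\lambda}$ is well-defined and the signs of $j_1, j_2$ are preserved (consistent with \eqref{signsj1j2}); this matches exactly the hypothesis in the definition of a homogeneous function used in the Euler theorem quoted above.
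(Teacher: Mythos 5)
Your proposal is correct and follows essentially the same chain of implications as the paper: the rescaling $(I_1,I_2)\mapsto(\lambda I_1,\lambda I_2)$ corresponds via \eqref{coordsT3I1I2} and \eqref{jacobiVectors} to $(r,s)\mapsto(\lambda^{1/2}r,\lambda^{1/2}s)$, which with $Z_0=0$ forces $z_i\mapsto\lambda^{1/2}z_i$ and hence $b_i\mapsto\lambda b_i$. Your added remarks on the positivity of $\lambda$ and the preservation of $Z_0=0$ are consistent with, and slightly more explicit than, the paper's presentation.
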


From corollary \ref{cor:Euler} and proposition \ref{prop:sidesSqRescaling} we obtain at once

\begin{proposition} \label{prop:vortexEuler}
	If the center of circulation is at the origin, then
	\[
	  I _1 \frac{ \partial \ln b _i }{ \partial I _1 } + I _2 \frac{ \partial \ln b _i }{ \partial  I _2 } = 1 \,, \quad i = 1, 2, 3 \,. 
	\] 
\end{proposition}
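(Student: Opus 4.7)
The plan is to combine the two ingredients immediately preceding the statement: Proposition \ref{prop:sidesSqRescaling}, which asserts that each $b_i$ is homogeneous of degree one in $(I_1, I_2)$ when $Z_0 = 0$, and Corollary \ref{cor:Euler}, which says that for any function $f$ homogeneous of degree one the identity $\mathbf{q}\cdot\nabla\ln f(\mathbf{q}) = 1$ holds. The proposition is then nothing more than the application of the corollary to $f = b_i$ with $\mathbf{q} = (I_1, I_2) \in \mathbb{R}^2$.

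Concretely, I would first restate that, under the standing hypothesis $Z_0 = 0$, the rescaling $(I_1, I_2)\mapsto(\lambda I_1,\lambda I_2)$ implements $b_i\mapsto\lambda b_i$ for $i=1,2,3$, as established in Proposition \ref{prop:sidesSqRescaling}. Viewing $b_i$ as a smooth positive function on the open cone $\{(I_1,I_2) : b_i>0\}\subset\mathbb{R}^2$ (positivity is automatic since $b_i = |z_j-z_k|^2$ and we stay away from collisions), the hypothesis of Corollary \ref{cor:Euler} is met. The corollary then yields
\[
  I_1\frac{\partial\ln b_i}{\partial I_1} + I_2\frac{\partial\ln b_i}{\partial I_2} = 1,
\]
for each $i=1,2,3$, which is exactly the claim.

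There is no real obstacle here; the proposition is essentially a one-line consequence of the preceding two results, and the only thing worth checking is that the rescaling argument of Proposition \ref{prop:sidesSqRescaling} genuinely produces degree-one homogeneity in the pair $(I_1,I_2)$ rather than, say, in $(\sqrt{I_1},\sqrt{I_2})$—but this is exactly what was established there via the chain $(I_1,I_2)\mapsto(\lambda I_1,\lambda I_2)\Rightarrow (r,s)\mapsto (\lambda^{1/2}r,\lambda^{1/2}s)\Rightarrow b_i\mapsto \lambda b_i$. So the proof will be a short two-sentence deduction invoking both results by name.
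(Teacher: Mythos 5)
Your proposal is correct and is exactly the paper's argument: the paper derives Proposition \ref{prop:vortexEuler} ``at once'' from Corollary \ref{cor:Euler} applied to the degree-one homogeneous functions $b_i(I_1,I_2)$ furnished by Proposition \ref{prop:sidesSqRescaling}. Your extra remark about restricting to the open set where $b_i>0$ (away from collisions) is a harmless refinement the paper leaves implicit.
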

In terms of coordinates $ (I _1, I _2, \varphi _1, \varphi _2) $, the hamiltonian takes the form
\[
	h = - \frac1{4 \pi} \sum _{ i = 1 } ^3 \Gamma _j \Gamma _k \ln b _i (I _1, I _2, \varphi _1) \,, \quad (i,j,k) \in C _3 \,. 
\] 
Its Hamiltonian vector field is
\[
	X _h = \frac{ \partial h }{ \partial \varphi _1 } \frac{ \partial }{ \partial I _1 } + \frac1{4\pi} \sum_{ (i,j,k) \in C _3 } \Gamma _j \Gamma _k \left( \frac{ \partial \ln b _i }{ \partial I _1 }  \frac{ \partial }{ \partial \varphi _1 } + \frac{ \partial \ln b _i }{ \partial I _2 } \frac{ \partial }{ \partial \varphi _2 } \right) \,.
\]
Hence, using \eqref{conn1form3v} and \eqref{eq:I2etc},
\begin{align*}
	\alpha (X _h) &=  \frac1{I _2} (I _1 d \varphi _1 + I _2 d \varphi_2) (X _h) \\
	&=  \frac1{4 \pi I _2} \sum _{ (i,j,k) \in C _3 } \Gamma _j \Gamma _k \left( I _1 \frac{ \partial \ln b _i }{ \partial I _1 } + I _2 \frac{ \partial \ln b _i }{ \partial I _2 } \right) \\
	&=  -\frac1{4 \pi \mu} ( \Gamma _1 \Gamma _2 + \Gamma _2 \Gamma _3 + \Gamma _3 \Gamma _1 )
\end{align*}
where the last equality follows from proposition \ref{prop:vortexEuler}.  Thus,

\begin{theorem} \label{thm:dynamicPhase}
	Let $c (t) $ be the orbit of the system in $\mathbf{J}^{-1}(\mu)$ corresponding to a closed orbit $c_\mu(t) \in \mathcal{P} _\mu $ with period $T$ in the three-vortex problem.  Then its dynamic phase is
	\begin{align}
	  \theta _d (T) &= \int _0 ^T \alpha \left( X _h( c(t) ) \right) \, d t 
	  =  -\frac{ \Gamma _1 \Gamma _2 + \Gamma _2 \Gamma _3 + \Gamma _3 \Gamma _1 }{4 \pi \mu } \, T \,.  \label{eq:dynamicPhase}
	\end{align}
	That is to say, the dynamic phase corresponding to a closed orbit in $ \mathcal{P} _\mu $ is proportional to its period, the constant of proportionality being equal to the virial~(\ref{eq:virial}) divided by $ 2 \pi $ times the second moment of circulation~(\ref{eq:I2etc}).
\end{theorem}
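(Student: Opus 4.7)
The plan is to evaluate the general dynamic phase formula \eqref{eq:dphasegen}, which for the abelian reduction at hand specializes (after identifying $\exp : \mathbb{R} \to SO(2)$ with reduction mod $2\pi$) to the scalar line integral $\theta_d(T) = \int_0^T \alpha(X_h(c(t)))\,dt$. The central observation that makes the computation tractable is that the integrand $\alpha(X_h)$ will turn out to be a constant depending only on the vortex strengths and on $\mu$; once this is established, the integral immediately collapses to that constant multiplied by $T$.

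First I would write the Hamiltonian in the reduced coordinates $(I_1, I_2, \varphi_1, \varphi_2)$ as
\[
h = -\frac{1}{4\pi} \sum_{(i,j,k) \in C_3} \Gamma_j \Gamma_k \ln b_i(I_1, I_2, \varphi_1),
\]
using $\ln|z_j - z_k| = \tfrac{1}{2} \ln b_i$ together with the fact that $h$ is $SE(2)$-invariant and hence depends on the angles only through $\varphi_1$. Reading off $X_h$ from $\mathbf{i}_{X_h} \Omega_3 = dh$ with $\Omega_3 = dI_1 \wedge d\varphi_1 + dI_2 \wedge d\varphi_2$ gives $\dot{\varphi}_k = -\partial h/\partial I_k$ and $\dot{I}_1 = \partial h/\partial \varphi_1$, while $\dot{I}_2 = 0$ automatically.

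Next I would pair this vector field with the connection one-form \eqref{conn1form3v}, rewritten on the level set $I_2 = -\mu$ in the symmetric form $\alpha = (1/I_2)(I_1\, d\varphi_1 + I_2\, d\varphi_2)$. The $d\varphi_k$-pieces extract $\partial h/\partial I_k$, producing
\[
\alpha(X_h) = \frac{1}{4\pi I_2} \sum_{(i,j,k) \in C_3} \Gamma_j \Gamma_k \left( I_1 \frac{\partial \ln b_i}{\partial I_1} + I_2 \frac{\partial \ln b_i}{\partial I_2} \right).
\]
Here Proposition~\ref{prop:vortexEuler} enters decisively: because each $b_i$ is homogeneous of degree one in $(I_1, I_2)$, the logarithmic form of Euler's theorem in Corollary~\ref{cor:Euler} forces the parenthesized factor to equal $1$ identically. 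The inner sum collapses to $\Gamma_1\Gamma_2 + \Gamma_2\Gamma_3 + \Gamma_3\Gamma_1$, and substituting $I_2 = -\mu$ produces the state-independent constant $-(\Gamma_1\Gamma_2 + \Gamma_2\Gamma_3 + \Gamma_3\Gamma_1)/(4\pi\mu)$.

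Integrating this constant over $[0,T]$ yields the stated formula. The main conceptual point, rather than a computational obstacle, is recognizing that the logarithmic structure of $h$ together with the homogeneity of the $b_i$ in the action variables $(I_1, I_2)$ is precisely what makes the integrand constant along trajectories; without this conspiracy one would face a genuinely trajectory-dependent integral and no clean proportionality to the period $T$ would be available. The analogue of Montgomery's formula is thus ultimately a consequence of the rescaling symmetry of the $N$-vortex equations, channelled through the orthogonal connection.
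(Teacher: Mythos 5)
Your proposal is correct and follows essentially the same route as the paper: express $h$ as $-\frac{1}{4\pi}\sum\Gamma_j\Gamma_k\ln b_i$, pair $X_h$ with the connection one-form written as $\frac1{I_2}(I_1\,d\varphi_1+I_2\,d\varphi_2)$, and invoke the degree-one homogeneity of the $b_i$ in $(I_1,I_2)$ via the logarithmic Euler identity (Corollary~\ref{cor:Euler} and Proposition~\ref{prop:vortexEuler}) to make $\alpha(X_h)$ the constant $-(\Gamma_1\Gamma_2+\Gamma_2\Gamma_3+\Gamma_3\Gamma_1)/(4\pi\mu)$. The signs, the Hamiltonian vector field, and the final integration all match the paper's argument.
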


\subsection{Total phase}

From~\eqref{geometricPhaseSimpleArea} and~\eqref{eq:dynamicPhase}, the total phase, which is the group composition of the geometric phase and the dynamic phase, can be expressed as 
\begin{align*}
\theta_{{\rm tot}}&= \theta_g + \theta_d, \quad {\rm (mod \: 2 \pi)} \\
&= \frac{ 1 }{ 2 }  \iint _{ U _1 }  \omega _1 + \frac{ V_0}{4 \pi I _2} \, T, \quad {\rm (mod \: 2 \pi)}
\end{align*}
where $ U _1 $ is the radial projection of $U$ on the normalized reduced space $ \mathcal{P} _1 $ and $ \omega _1 $ is the 2-form given in definition \ref{def:omegamu} with $ \mu = 1 $.

\subsection{Example}

Let us illustrate the theory developed so far by looking at the reconstruction phases associated to the case $ W _0 > 0 $ shown in figure \ref{fig:reducedspaces}.  Here $ \Gamma _1 = 7.615 $, $ \Gamma _2 = -3.46 $, $ \Gamma _3 = -3.155 $ and $ \mu = 1 $.

\begin{figure}
\begin{center}
\includegraphics[scale=0.6]{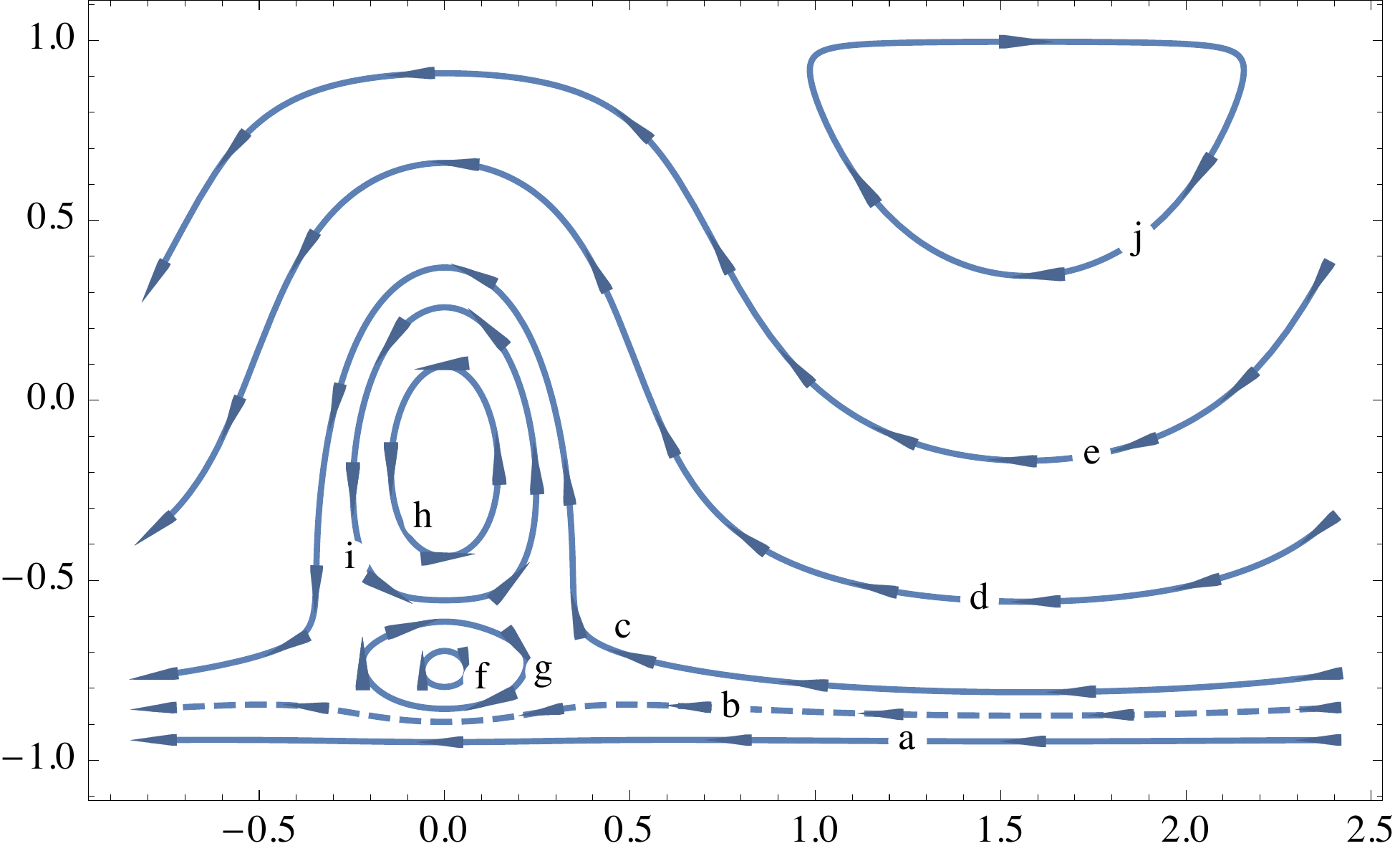}
\end{center}
\caption{\label{fig:phaseportrait}Phase portrait in cylindrical coordinates corresponding to the case $ W _0 > 0 $ shown in figure \ref{fig:reducedspaces}.  Horizontal axis is the interval $ - \pi / 4 \le \varphi _1 \le 3 \pi / 4 $.  Flow lines show the orientation induced by the reduced Hamiltonian vector field.}
\end{figure}

Figure \ref{fig:phaseportrait} shows a few flow lines of the phase portrait in cylindrical coordinates given by \eqref{cyl_coords_sphere}.  Note that all the flow lines are closed loops, since the left and right vertical lines at $ \varphi _1 = - \pi /4 $ and $ \varphi _1 = 3 \pi /4 $ actually coincide.

\begin{table}
\newcommand{\titlesi}{\text{orbit} & \text{energy} & \text{geom. phase} & \text{period} & \text{dyn. phase} & \text{total phase}\\\hline}
\begin{displaymath}
\begin{array}{c|ccccc}
\titlesi
 \text{(a)} & -11.9764 & 6.1127 & 0.0828
   & 0.26 & 0.0895 \\
 \text{(b)} & -10.1509 & 5.8607 & 0.2201
   & 0.6912 & 0.2687 \\
 \text{(c)} & -9.2487 & 4.8855 & 0.675 &
   2.1195 & 0.7218 \\
 \text{(d)} & -7.45 & 3.5096 & 0.9527 &
   2.9913 & 0.2177 \\
 \text{(e)} & -6.1434 & 1.9882 & 1.4105 &
   4.4289 & 0.1339 \\
 \text{(f)} & -7.45 & 0.0094 & 0.0107 &
   0.0337 & 0.0431 \\
 \text{(g)} & -9.2487 & 0.0828 & 0.1346 &
   0.4227 & 0.5054 \\
 \text{(h)} & -11.9764 & -0.1214 & 0.0646
   & 0.2027 & 0.0813 \\
 \text{(i)} & -10.1509 & -0.328 & 0.1875
   & 0.5888 & 0.2608 \\
 \text{(j)} & -5.2727 & 0.5784 & 1.8485 &
   5.8041 & 0.0993 \\
\end{array}
\end{displaymath}
\caption{\label{tab:reconstructionphases}Reconstruction phases for each of the periodic orbits shown in figure \ref{fig:phaseportrait}.  (Different periodic orbits may have the same energy.)}
\end{table}

Table \ref{tab:reconstructionphases} shows the reconstruction phases for each of the periodic orbits displayed in figure \ref{fig:phaseportrait}.  The geometric phase $ \theta _g $ was computed using formula \eqref{geometricPhaseDefined}.  The dynamic phase $ \theta _d $ is given by formula \eqref{eq:dynamicPhase}, where the period was computed numerically.  The total phase $ \theta _{\text{tot}} $ is $ \theta _g + \theta _d \;(\operatorname{mod} 2 \pi) $.  (All angles are measured in radians.)

\begin{figure}
\begin{center}
\includegraphics[scale=0.45]{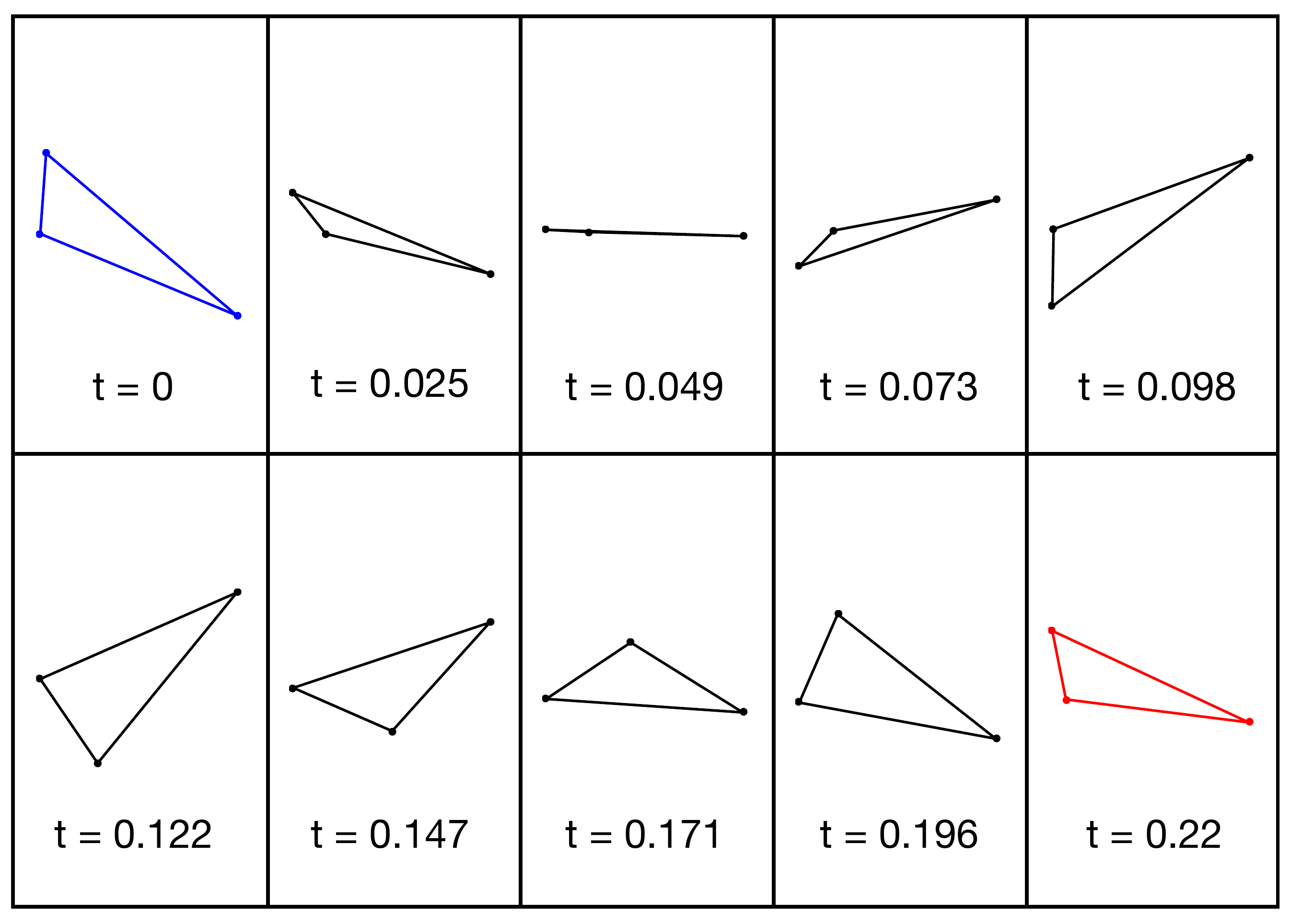}
\end{center}
\caption{\label{fig:vortexdyn}Evolution of the vortex triangle corresponding to the reduced periodic orbit labeled (b) in figure \ref{fig:phaseportrait}.  (Triangles have been rescaled to fit uniformly and actual position relative to the origin is not shown.)}
\end{figure}

For the periodic orbit (b), highlighted as a dashed curve in figure \ref{fig:phaseportrait}, the corresponding evolution of the vortex triangle is shown in figure \ref{fig:vortexdyn}.  Its initial and final configurations are displayed together in figure \ref{fig:vortexphases}, where $ \theta _g $, $ \theta _d $ and $ \theta _{\text{tot}} $ are represented.  Also shown in figure \ref{fig:vortexphases} is the center of circulation $ Z _0 $, placed at the origin.

\begin{figure}
\begin{center}
\includegraphics[scale=0.425]{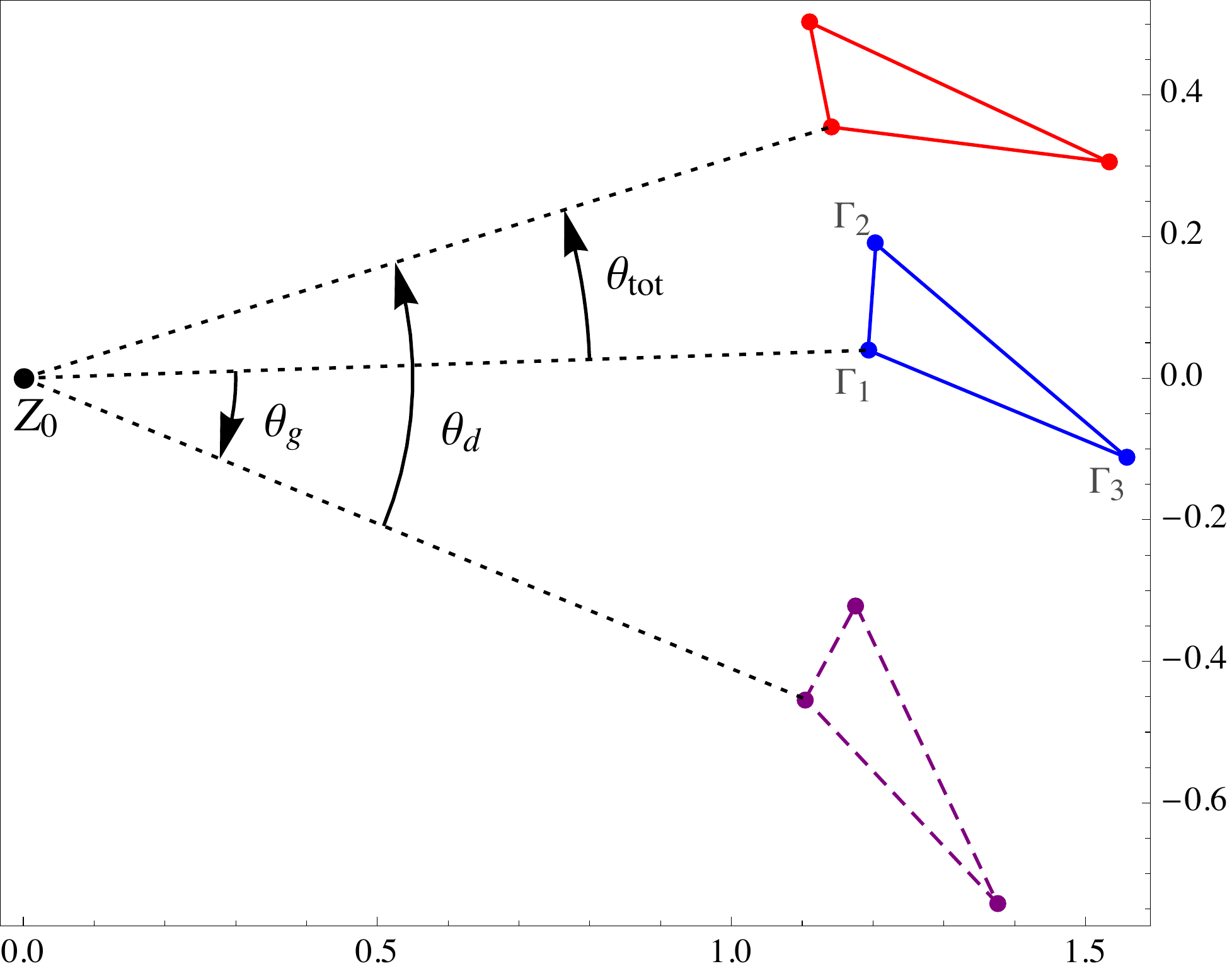}
\end{center}
\caption{\label{fig:vortexphases}Initial and final configurations for the time series shown in figure \ref{fig:vortexdyn}.  Also shown is the triangle obtained by rotating the initial configuration by the geometric phase $ \theta _g = 5.86072 = -0.42247 \; (\operatorname{mod} 2 \pi)$.  The dynamic and total phases are $ \theta _d = 0.69119 $ and $ \theta _{\text{tot}} = 0.26872 $.}
\end{figure}

\newpage

\subsection{The case of identical vortices}
Before ending this section, the case of identical vortices, $\Gamma_1=\Gamma_2=\Gamma_3=\gamma$, is discussed since it allows comparisons with two previous works in the literature on point vortices.

\paragraph{Comparison with the work of Aref and Pomphrey.}
The dynamic phase formula~(\ref{eq:dynamicPhase}) becomes more useful due an interesting exact solution derived by Aref and Pomphrey \cite{ArefPomphrey1982}. Introducing the variable $I:=(I_1/I_2)^2$, they show that the $(I_1,\varphi_1)$ system, being a two-dimensional Hamiltonian system, can be reduced to a quadrature which has an explicit solution in terms of Jacobi elliptic functions. This leads to an exact solution for the time period of the rotation in terms of elliptic integrals. Some details of their derivation are recalled in Appendix \ref{apdx:ArefPomphreyTimePeriod}, but the main result is as follows.

The time period of the $I$ variable is expressed by one of the following relations, 
\begin{align*}
T_a&= \frac{16 \pi \lambda^2 I_2 K(m_a)}{3 \gamma^2 \omega_a}, \quad  0 < \lambda < \frac{1}{\sqrt{2}}\\
 T_b&= \frac{16 \pi \lambda^2 I_2 K(m_b)}{3 \gamma^2 \omega_b}, \quad \frac{1}{\sqrt{2}} < \lambda < 1, 
\end{align*}
where the parameter $\lambda$ is defined as 
\[\lambda^2:=e^{-4 \pi H/\gamma^2}/(4 I_2^3)\]
and $K(m)$ is the complete elliptic integral of the first kind with $m$ the modulus of the elliptic function. The definitions of $m_a,m_b,\omega_a,\omega_b$ are given in Appendix \ref{apdx:ArefPomphreyTimePeriod}. The time period of the closed orbit of the $(I_1,\varphi_1)$ is some integral multiple $n$ of $T_a$ or $T_b$. Since all quantities in the $T_a,T_b$ relations are obtained from the initial configuration of the vortices, it follows that, {\it provided $n$ can be determined,  the dynamic phase~(\ref{eq:dynamicPhase}) can be predicted from the initial configuration of the identical vortices}.

\paragraph{Comparison with the work of Kuznetsov and Zaslavsky.}

In the interesting work of Kuznetsov and Zaslavsky \cite{KuZa98}, the chaotic advection of a tracer particle in the flow field of three identical point vortices is studied. In the appendix of that paper,  a formula for the rotation, i.e. the total phase, of the vortex  triangle is also derived. However, the formula is an integral formula and their approach is not in the framework of reconstruction phases and Hamiltonian systems with symmetry. Consequently, the insight that the rotation angle can be split into a dynamic phase and geometric phase is not obtained.

\section{Reconstruction phases with four identical vortices}
\label{sec:phasesFourVortices}

In the four vortex problem, the investigation of periodic orbits on the symmetry reduced spaces for arbitrary vortex strengths has not been as well studied. Aref and Pomphrey \cite{ArefPomphrey1982} show that such periodic orbits exist for the case of identical vortices exhibiting a discrete symmetry corresponding to rectangular configurations. These trajectories can be solved exactly, as mentioned in Novikov's earlier paper \cite{No75}. The discrete symmetry is preserved by the motion and this allows the reduction of the number of degrees of freedom.  In like manner, Bolsinov, Borisov and Mamaev \cite{BoBoMa1999} find some more periodic orbits for the four vortex problem with discrete symmetries.  Also, in the paper by Borisov, Mamaev and Killin \cite{BoMaKi2004}, several more relative (and absolute) periodic orbits for the case of identical vortices have been investigated.  Total rotation angles are computed in this paper but, as in \cite{BoBoMa1999}, the splitting into geometric and dynamic phases is not addressed.  
For the periodic orbits corresponding to identical vortices mentioned in the last two references, the reconstruction phases can in principle be computed using the formulas obtained in this section.

In what follows it is shown that the methods of the previous sections can also be applied to the four vortex problem with identical vortices.  Expressions for geometric and dynamic phases are obtained analogously to those in the three vortex problem.

\begin{remark}
	It is likely that the four vortex problem with non-identical vortices can be dealt with using Jacobi-Bertrand-Haretu coordinates, akin to what we did in section \ref{sec:phasesThreeVortices} for three vortices.  However, the aforementioned paucity of investigations of periodic orbits in this case make the problem less interesting.  For this reason, as well as for simplicity, we only consider identical vortices in this section.
\end{remark}

\subsection{Coordinate transformations}

Let us consider a system of four vortices, each with vortex strength $\gamma$, with position coordinates $ (z _1, z _2, z _3, z _4) \in \mathbb{C} ^4 $.  Introducing the matrix $\mathbb{H} _0$, write the Hermitian metric given in \eqref{eq:hermitianMetric} as
\[
	\left\langle \mathbf{z}, \boldsymbol{\zeta} \right\rangle = \mathbf{z} ^t \, \mathbb{H} _0 \, \boldsymbol{\zeta} \,, 
\] 
with $ \mathbb{H} _0 = \gamma \, \mathbb{I} _4 $.  Here $ \mathbb{I} _k $ is the $ k \times k $ identity matrix, $ \mathbf{z} = (z _1, z _2, z _3, z _4) $ and $ \boldsymbol{\zeta} = (\zeta _1, \zeta _2, \zeta _3, \zeta _4) $.

The coordinate transformations (similar to Aref and Pomphrey) for four identical vortices proceed as follows. The first canonical transformation 
$$T_1:(z_1,z_2,z_3,z_4) \mapsto \big(q _n + \mi \, p _n \big) _{ n = 0 } ^3 $$ 
is a discrete Fourier transform (DFT) given by
\begin{align*}
q_n + \mi \, p_n&=\frac{1}{\sqrt{N}} \sum_{\alpha= 1}^N z_{\alpha} \exp \left[\frac{\mi \, 2 \pi n ( \alpha - 1)}{N} \right], \quad n=0,1,2,3
\end{align*} 
with $N$=4.
By Parseval's theorem, the  DFT preserves the Hermitian form:  $ \left\langle T _1 \mathbf{z}, T _1 \boldsymbol{\zeta} \right\rangle = \left\langle \mathbf{z}, \boldsymbol{\zeta} \right\rangle $.  In other words, $$ \mathbb{H} _1 = \mathbb{H} _0 \,, $$ where $ \mathbb{H} _1 $ denotes the matrix of the Hermitian metric in the coordinates $ \{ q _n + \mi \, p_n \} $.  Note also that $ q _0 + \mi \, p _0 = (z _1 + z _2 + z _3 + z _4)/2 = 2 Z _0 $, where $ Z _0 $ is the center of circulation.

The second transformation $ T_2 : ( q _n + \mi \, p _n ) \mapsto ( j _n, \theta _n ) $ 
is given by 
\[\sqrt{2 j_n} \exp{i \theta_n}=q_n + \mi \, p_n, \quad n=0,1,2,3 \,. \] 
Note that, while $ T _2 : \mathbb{C} ^4 \longrightarrow \mathbb{R} ^8 $, $ T _2 ^{-1} $ can be extended to a map between two complex vector spaces
\[
	S _2 : \mathbb{C} ^8 \longrightarrow \mathbb{C} ^4 \,, 
\] 
so that $ T _2 ^{-1} = \left. S _2 \right| _{ \mathbb{R} ^8 } $.
The pull-back of the Hermitian metric in $ \mathbb{C} ^4 $ to $ \mathbb{C} ^8 $ is given by the matrix $ \mathbb{H} _2 := (D S _2) ^T \cdot \mathbb{H} _1 \cdot D S _2 $.  Towards obtaining its components, let $ \mathbf{r} = (r _0, r _1, r _2, r _3) $ with $ r _k = q_k + \mi \, p _k $.  We first compute 
\[
	\frac{ \partial r _k }{ \partial j _k } = \frac1{ \sqrt{ 2 j_k } } \me ^{ \mi \, \theta _k } \,, \quad \frac{ \partial r _k }{ \partial \theta _k } = \sqrt{ 2 j _k } \mi \, \me^{ \mi \, \theta _k } \,. 
\] 
Thus $ \left\langle \partial / \partial j _k, \partial / \partial j _k \right\rangle := \left\langle \partial \, \mathbf{r} / \partial j _k, \partial \, \mathbf{r} / \partial j _k \right\rangle = \gamma / (2 j _k) $.  Proceeding in a similar fashion it is easy to compute:
\[
	\left\langle \frac{ \partial }{ \partial j _k }, \frac{ \partial }{ \partial j _l } \right\rangle = \frac{ \delta'_{ kl } }{ 2 j _k } \,, \quad
	\left\langle \frac{ \partial }{ \partial j _k }, \frac{ \partial }{ \partial \theta _l } \right\rangle = -\mi \, \delta' _{ kl } \,, \quad
	\left\langle \frac{ \partial }{ \partial \theta _k }, \frac{ \partial }{ \partial \theta _l } \right\rangle =
	\delta' _{ kl } \, 2 j _k \,,
\] 
where $ \delta' _{ kl } = \gamma $ if $ k = l $ and zero otherwise.  Hence, 
the matrix of the Hermitian metric in coordinates $ (j _k, \theta _k) $, is given by 
\[
	\mathbb{H} _2 = \gamma \begin{pmatrix} \mathrm{J}^{-1} & - \mi \, \mathbb{I} _4 \\ \mi \, \mathbb{I} _4 & \mathrm{J} \end{pmatrix} \,, 
\quad \text{where} \quad 
	\mathrm{J} = 2 \begin{pmatrix} j _0 &&&\\  & j _1 &&\\ && j _2 &\\ &&& j _3 \end{pmatrix} \,. 
\] 

The third transformation $T_3:(j_n,\theta_n) \mapsto (I_n,\varphi_n)$ is achieved by means of the type II generating function
\[
G_2=-\varphi _0 j_0-j_1 \left(-2 \varphi _1-\varphi _2+\varphi _3\right)-j_2 \left(\varphi _1-\varphi _2+\varphi _3\right)-j_3 \left(\varphi _1+2 \varphi _2+\varphi _3\right)
\] 
which, with the equations $ \theta _k = \partial G _2 / \partial j _k $, $ I _k = \partial G _2 / \partial \varphi _k $, gives $ (I _0, \varphi _0) = (j _0, \theta _0) $ and
\begin{equation} \label{eq:4VMatrixEquationT3}
	\begin{pmatrix} \mathbf{I} \\ \boldsymbol{\varphi} \end{pmatrix} = \begin{pmatrix} A^{-1} &\\ &B^{-1} \end{pmatrix} \begin{pmatrix} \mathbf{j} \\[0.75ex] \boldsymbol{\theta} \end{pmatrix} \,, 
\end{equation}
where $ (\mathbf{j}, \boldsymbol{\theta}) = (j _1, j _2, j _3; \theta _1, \theta_2, \theta _3) $ and $ (\mathbf{I}, \boldsymbol{\varphi}) = (I _1, I _2, I _3; \varphi _1, \varphi _2, \varphi _3) $, and 
\[
	A = \frac13 \begin{pmatrix} -1 & 0 & 1 \\ 1 & -1 & 1 \\ 0 & 1 & 1 \end{pmatrix} \,, \quad
	B = \begin{pmatrix} -2 & -1 & 1 \\ 1 & -1 & 1 \\ 1 & 2 & 1 \end{pmatrix} \,. 
\] 
The generating function $ G _2 $ has been chosen so that $ A ^{t} B = \mathbb{I} _3 $, which is easily checked.

Note that $ I _0 = 2 | Z _0 | ^2 $, so that keeping the center of circulation at the origin amounts to $ I _0 = 0 $.  The corresponding indeterminate angle $\varphi_0$ is ignorable. Assuming this is the case, we only need to keep coordinates $ (\mathbf{I}, \boldsymbol{\varphi}) $ to describe the system.
The Hermitian metric in these coordinates is
\[
	\mathbb{H} _3 = 
	\gamma \begin{pmatrix} A ^t \\ & B ^t \end{pmatrix}
	\begin{pmatrix} \tilde{\mathrm{J}} ^{-1} & -\mi \, \mathbb{I} _3 \\ \mi\, \mathbb{I} _3 & \tilde{\mathrm{J}} \end{pmatrix}
	\begin{pmatrix} A \\ &B \end{pmatrix} =
	\gamma \begin{pmatrix} A ^t \tilde{\mathrm{J}} ^{-1} A & -\mi \, \mathbb{I} _3 \\ -\mi \, \mathbb{I} _3 & B ^t \tilde{\mathrm{J}} B \end{pmatrix} \,, 
\] 
where $ \tilde{\mathrm{J}} $ denotes the diagonal matrix $\mathrm{J}$ minus the $j_0$ entry, with the entries $ j _1, j _2, j _3 $ viewed as linear functions of $ I _1, I _2, I _3 $ according to \eqref{eq:4VMatrixEquationT3}.  For our purposes we only need to explicitly compute
\[
	B ^t \tilde{\mathrm{J}} B = 2 \begin{pmatrix} 
		-I _1 + I _3 & -I _1 + I _2 + I _3 & I _1 \\
		-I _1 + I _2 + I _3 & I _2 + 2 I _3 & I _2 \\
		I _1 & I _2 & I _3 
	\end{pmatrix} \,. 
\] 
	
The metric and symplectic form in coordinates $ (\mathbf{I}, \boldsymbol{\varphi}) $ can now be obtained from relations \eqref{eq:metricAndSymplecticForm}.  In this manner, from the expression for $ \mathbb{H} _3 $ given above we get 
\[
	\mathbb{M} _3 = \gamma \begin{pmatrix}
		A ^t \tilde{\mathrm{J}} A \\ & B ^t \tilde{\mathrm{J}} B
	\end{pmatrix}
	\quad \text{and} \quad
	\mathbb{J} _3 = \gamma \mathbb{J} 
\] 
where $ \mathbb{M}_3 $ and $ \mathbb{J} _3 $ are, respectively, the matrices of the metric $ \left\langle \! \left\langle \,, \right\rangle \! \right\rangle $ and symplectic form $\Omega$ in coordinates $ (\mathbf{I}, \boldsymbol{\varphi}) $.  (Here $ \mathbb{J} $ denotes the canonical symplectic matrix.)

From \eqref{eq:4VMatrixEquationT3} we compute that 
\[
	\varphi _1 = \frac{\theta _2 - \theta _1}{3} \,, \quad
	\varphi _2 = \frac{\theta _3 - \theta _2}{3} \,, \quad \text{and} \quad 
	\varphi _3 = \frac{\theta _1 + \theta_2 + \theta _3}{3} \,. 
\]
Thus, the rotation group $ SO(2) $ acts additively on the coordinate $ \varphi _3 $ while $ \varphi _1 $ and $ \varphi _2 $ remain invariant.  Since the hamiltonian is $ SO(2) $-invariant, $ \varphi _3 $ is a cyclic variable and thus $ I _3 $ is a conserved quantity.  Indeed, 
\begin{align*}
I_3&=j_1+j_2+j_3\\
&=\frac{1}{2}\left(\sum_{k=1}^3 q_k^2+\sum_{k=1}^3 p_k^2 \right) \\
&=\frac{1}{8}\left(4 \sum_{k=1}^4x_k^2+4 \sum_{k=1}^4y_k^2 - \left(\sum_{k=1}^4 x_k\right)^2 -\left(\sum_{k=1}^4 y_k\right)^2 \right) \\
&= \frac12 \sum _{ k = 1 } ^4 | z _k | ^2 = \Theta _0 / \gamma 
\end{align*}
where $ \Theta _0 $ is the angular impulse.  (Here we have used that the center of circulation is at the origin.)  
Using the expression for the symplectic matrix $ \mathbb{J}_3 $ given above, it is easy to compute that the momentum map $\mathbf{J}$ yielding $ X _{\mathbf{J}} = \partial / \partial \varphi_3 $ is minus the angular impulse; that is to say,
\[
	\mathbf{J} = - \gamma \, I _3 \,. 
\] 

Therefore the momentum level set $ \mathbf{J} ^{-1} (\mu) $ is parametrized by $(I_1 $, $I_2$, $\varphi_1$, $\varphi_2$, $\varphi_3)$.   The reduced space $ \mathcal{P} _\mu = J ^{-1} (\mu) / SO(2) $ is a four-dimensional manifold.  Coordinates $ ( I _1, I _2; \varphi _1, \varphi _2) \in \Delta \times \mathbb{T} ^2 $ define a coordinate chart for $ \mathcal{P} _\mu $, where $\Delta$ is the open triangle in $ \mathbb{R}^2 $ bounded by the lines  $ I _1 = - \mu / \gamma $, $ I _2 = - \mu / \gamma $ and $ I _2 - I _1 = \mu / \gamma $, and $ \mathbb{T}^2 = \mathbb{R} ^2 / (2 \pi \mathbb{Z} / 3) ^2 $.  We will refer to those points in $ \mathcal{P}_\mu $ not covered by this chart as its singular set $ \mathcal{S}_\mu $.

\vspace{1ex}
\begin{remark}
As a point in $ \mathcal{P}_\mu $ approaches $ \mathcal{S} _\mu $, $ (I _1, I _2) $ approaches $ \delta \Delta $.  If $ (I _1, I _2) \in \delta \Delta $, $ \{(I _1, I _2)\} \times \mathbb{T} ^2 $ gets mapped to a circle or a point, the latter being the case precisely when $ (I _1, I _2) $ is at a vertex of $\Delta$.
\end{remark}

\subsection{The connection}

The vertical space $ V _p $ at any $ p \in \mathbf{J} ^{-1} (\mu) $ is the linear space spanned by $ \partial / \partial \varphi _3 $.  The horizontal space $ H _p $ is metric orthogonal to the vertical space.  Let
\[
	w _p = a _1 \frac{ \partial }{ \partial I _1 } + a _2 \frac{\partial }{ \partial I _2 } + b _1 \frac{ \partial }{ \partial \varphi _1 } + b _2 \frac{ \partial }{ \partial \varphi _2 } + b _3 \frac{ \partial }{ \partial \varphi _3 } \,. 
\] 
Then, from the expression for $ \mathbb{M} _3 $ given above, the condition $ \left\langle \! \left\langle w _p, \partial / \partial \varphi _3 \right\rangle \! \right\rangle = 0 $ becomes
\[
	2 \gamma (b _1 I _1 + b _2 I _2 + b _3 I _3 ) = 0 \,. 
\] 
Thus,
\[
	H _p = \operatorname{span} \left( \frac{ \partial }{ \partial I _1 } \,, \; \frac{ \partial }{ \partial I _2 } \,, \; I _3 \frac{ \partial }{ \partial \varphi _1 } - I _1 \frac{ \partial }{ \partial \varphi_3 } \,, \; I _3 \frac{ \partial }{ \partial \varphi _2 } - I _2 \frac{ \partial }{ \partial \varphi_3 } \right) \,. 
\] 
The connection one-form $\alpha:T\mathbf{J}^{-1}(\mu) \rightarrow \mathbb{R}$, defined by the conditions $ \alpha (\partial / \partial \varphi _3 ) = 1 $ and $ w _p \in H _p \Rightarrow \alpha (w _p) = 0 $, is thus given by
\begin{equation} \label{eq:4VConnectionOneForm}
	\alpha = \frac{ I _1 }{ I _3 } d \varphi_1 + \frac{ I _2 }{ I _3 } d \varphi_2 + d \varphi_3 
\end{equation} 
or, setting $ I _3 = -\mu / \gamma $,
\[
	\alpha = - \frac{ \gamma }{ \mu } \left(  I _1 d \varphi _1 + I _2 d \varphi _2 \right) + d \varphi _3 \,. 
\]

\subsection{Reconstruction phases}

We now compute the geometric and dynamic phases.

\paragraph{The geometric phase.}  The connection 1-form can be written as 
\(
	\alpha = - \gamma \big( (I _1 - 1) d \varphi _1 + (I _2 + 1) d \varphi _2 + d \theta_2 \big) / \mu 
\).
Choose a section $ \sigma : \mathcal{P} _\mu \setminus \mathcal{S} _\mu \longrightarrow J ^{-1} (\mu) $ by setting $ \theta _2 = \text{const} $.  Then the pullback of the connection 1-form is
\[
	\sigma ^\ast \alpha = - \frac{ \gamma }{ \mu } \big[ (I _1 - 1) d \varphi _1 + (I _2 + 1) d \varphi _2 \big] 
\]
and the curvature of the connection is the 2-form
\[
	D \alpha := d \sigma ^\ast \alpha = - \frac{ \gamma }{ \mu } \big( d I _1 \wedge d \varphi _1  +  d I _2 \wedge d \varphi_2 \big) \,. 
\] 
(The same curvature is obtained, of course, if we use a section given by $ \theta _1 = \text{const.} $ or $ \theta _3 = \text{const} $.)  Therefore, 
by definition, the geometric phase around a closed orbit on the space $\mathcal{P}_\mu$ with coordinates $(I_1,I_2,\varphi_1,\varphi_2)$ is given by 
\begin{align*}
\theta_g:=\operatorname{exp} \left(-\oint_{c_\mu} \sigma^{\ast} \alpha \right) \equiv  \frac{\gamma}{\mu} \oint_{c _\mu} \big[ (I _1 - 1) d \varphi _1 + (I _2 + 1) d \varphi _2 \big]
\end{align*}
and, provided $c_\mu$ encloses a 2-dimensional subdomain $U \subset \mathcal{P}_\mu$, can also be expressed as 
\begin{align*}
\theta_g= \frac{\gamma}{\mu} \int_U \big(  dI_1 \wedge  d \varphi_1 + dI_2  \wedge d \varphi_2 \big) \,. 
\end{align*}

\paragraph{The dynamic phase.}  Let $ \delta _{ \alpha \beta } = z _\alpha - z _\beta $.  Going back to the discrete Fourier transform $ T _1 $, it is easy to see that, with the notation $ r _k = q _k + \mi \, p _k $,
\[
	\begin{pmatrix} r _1 \\ r _2 \\ r _3 \\ 0 \end{pmatrix} = \frac12 \begin{pmatrix}
		0 & 1 & \mi & 0 \\
		1 & 0 & 0 & 1 \\
		0 & 1 & -\mi & 0 \\
		1 & -1 & 1 & -1
	\end{pmatrix} \begin{pmatrix} \delta _{ 12 } \\ \delta _{ 13 } \\ \delta _{ 24 } \\ \delta _{ 34 } \end{pmatrix} \,, 
\] 
and that this is an invertible linear relation.  Moreover, $ \delta _{ 14 } = \delta _{ 12 } + \delta_{ 24 } $ and $ \delta _{ 23 } = \delta _{ 13 } - \delta _{ 12 } $.  It follows that the hamiltonian can be expressed as
\begin{align*}
	h &= - \frac1{4 \pi} \sum _{ \alpha < \beta } \Gamma _\alpha \Gamma _\beta \ln | \delta _{ \alpha \beta }  | ^2 \\
	&= - \frac{ \gamma ^2 }{ 4 \pi } \sum _{ \alpha < \beta } \ln \left| \sum _{ k = 1 } ^3  c _{ \alpha \beta } ^k  \, r _k \right| ^2 \,, \quad c _{ \alpha \beta } ^k \in \mathbb{C} \,. 
\end{align*}
Since $ r _k = \sqrt{2 j _k} \, \me ^{ \mi \, \theta _k } $ and $ j _k $ depends linearly on $ I _1, I _2, I _3 $, we have
\[
	h = - \frac{ \gamma ^2 }{ 4 \pi } \sum _{ \alpha < \beta } \ln \left( b _{ \alpha \beta } (I _1, I _2, I _3; \varphi _1, \varphi_2, \varphi_3 ) \right) 
\] 
where the $ b _{ \alpha \beta } $ are homogeneous functions of degree 1 in the variables $ I _1, I _2, I _3 $.

The dynamic phase is obtained as before by first computing 
\begin{align*}
\xi(t)&=\alpha \left(X_H(c(t))\right) 
\end{align*}
where $c(t)$ is the system trajectory on $\mathbf{J}^{-1}(\mu)$.  Using the expression for the connection 1-form in \eqref{eq:4VConnectionOneForm} and recalling that in $ (\mathbf{I}, \boldsymbol{\varphi}) $ coordinates the symplectic matrix is
\(
	\mathbb{J} _3 = \gamma \begin{pmatrix} 0 & \mathbb{I}  \\ - \mathbb{I}  & 0 \end{pmatrix}
\)
we get 
\begin{align*}
	\xi(t)&=\frac{ I_1}{ I _3 } \frac{d \varphi_1}{dt} + \frac{I_2}{ I _3 } \frac{d \varphi_2}{dt} +  \frac {d \varphi_3}{dt}, \\
	&=- \frac{ 1}{\gamma I_3} \left(I_1 \frac{\partial h}{\partial I_1} + I_2 \frac{\partial h}{\partial I_2} + I_3  \frac {\partial h}{\partial I_3} \right) \\
	&= \frac{ 1}{\mu} \, \mathbf{I} \cdot \nabla _I \, h(\mathbf{I}, \boldsymbol{\varphi}) \\
	&= -\frac{ \gamma ^2 }{ 4 \pi \mu } \sum _{ \alpha < \beta } \mathbf{I} \cdot \nabla _I  \ln \left( b _{ \alpha \beta } (\mathbf{I}; \boldsymbol{\varphi}) \right)
\end{align*} 
where $ \nabla _I $ denotes the gradient with respect to the $ I _1, I _2, I _3 $ variables.  Therefore, applying corollary \ref{cor:Euler},
\[
	\xi(t) = -\frac{ \gamma ^2 }{ 4 \pi \mu } \sum _{\alpha < \beta } 1 = -\frac{ 3 \gamma ^2 }{ 2 \pi \mu } \,. 
\] 
We thus arrive at

\begin{theorem} \label{thm:dynamicPhase4V}
	Let $ c (t) $ be the orbit of the system in $\mathbf{J}^{-1}(\mu)$ corresponding to a closed orbit $c_\mu(t) \in \mathcal{P}_\mu$ with period $T$ in the problem of four identical vortices with center of circulation at the origin.  Then its dynamic phase is
	\[
	  \theta _d (T) = \int _0 ^T \alpha \left( X _h(c (t) ) \right) \, d t 
	  =  -\frac{3  \gamma ^2 }{2 \pi \mu } \, T \,. 
	\] 
	That is to say, the dynamic phase corresponding to a closed orbit in $\mathcal{P}_\mu$ is proportional to its period, the constant of proportionality being equal to the virial divided by $ 2 \pi $ times the second moment of circulation.
\end{theorem}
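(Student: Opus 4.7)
The plan is to mirror the argument used for Theorem \ref{thm:dynamicPhase} in the three-vortex case, but with the three-dimensional action variable $(I_1, I_2, I_3)$ in place of $(I_1, I_2)$. By the general expression \eqref{eq:dphasegen} for the dynamic phase of an Abelian reconstruction, I have
\[
\theta_d(T) = \int_0^T \xi(t)\, dt, \qquad \xi(t) = \alpha\bigl(X_h(c(t))\bigr),
\]
so the entire content of the theorem is the assertion that $\xi(t)$ is the \emph{constant} $-3\gamma^2/(2\pi\mu)$ along any integral curve of $X_h$ lying in $\mathbf{J}^{-1}(\mu)$. Once this is shown, the integral is immediate.

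First I would establish a homogeneity property analogous to Proposition~\ref{prop:sidesSqRescaling}. Tracing the transformations $T_1, T_2, T_3$, the rescaling $z_\alpha \mapsto \lambda^{1/2} z_\alpha$ with $Z_0=0$ translates into $(I_1, I_2, I_3) \mapsto \lambda(I_1, I_2, I_3)$ while leaving the angles $\varphi_k$ fixed, and it scales each squared interparticle distance $|\delta_{\alpha\beta}|^2$ by $\lambda$. This is precisely the argument the paper already sets up when it expresses $h$ in terms of $b_{\alpha\beta}(I_1,I_2,I_3;\varphi_1,\varphi_2,\varphi_3)$, each of which is homogeneous of degree one in $(I_1,I_2,I_3)$. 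Corollary~\ref{cor:Euler} then gives $\mathbf{I} \cdot \nabla_I \ln b_{\alpha\beta} = 1$ for each of the $\binom{4}{2}=6$ pairs.

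Next I would pair the connection 1-form \eqref{eq:4VConnectionOneForm} with the Hamiltonian vector field. In the canonical coordinates $(\mathbf{I},\boldsymbol{\varphi})$ with symplectic matrix $\mathbb{J}_3 = \gamma\,\mathbb{J}$, Hamilton's equations give $\dot\varphi_k = \gamma^{-1}\,\partial h/\partial I_k$, and substituting into \eqref{eq:4VConnectionOneForm} (with $I_3 = -\mu/\gamma$ from the momentum constraint) yields
\[
\xi(t) = \frac{I_1}{I_3}\dot\varphi_1 + \frac{I_2}{I_3}\dot\varphi_2 + \dot\varphi_3 = \frac{1}{\mu}\,\mathbf{I}\cdot\nabla_I h.
\]
Writing $h = -(\gamma^2/4\pi)\sum_{\alpha<\beta}\ln b_{\alpha\beta}$ and applying the Euler identity to each term collapses the sum to $-(\gamma^2/4\pi)\cdot 6 \cdot 1 = -3\gamma^2/(2\pi)$, hence $\xi(t) \equiv -3\gamma^2/(2\pi\mu)$. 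Integrating over $[0,T]$ delivers the stated formula.

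The only step requiring real care is verifying the homogeneity claim, since it rests on tracking how the two matrix relations through $T_2$ and $T_3$ interact with the complex scaling in $\mathbb{C}^4$; everything else is a direct application of Euler's theorem and the connection formula already derived. I do not expect any genuine obstacle, as the structure parallels the three-vortex derivation exactly, with the Euler identity absorbing the combinatorial factor $\binom{N}{2}$ which is what produces the $3\gamma^2$ (rather than $\Gamma_1\Gamma_2+\Gamma_2\Gamma_3+\Gamma_3\Gamma_1$) in the numerator.
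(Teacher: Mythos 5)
Your proposal follows the paper's own proof essentially verbatim: the homogeneity of the $b_{\alpha\beta}$ in $(I_1,I_2,I_3)$, Euler's identity collapsing the sum over the six vortex pairs, and the pairing of the connection one-form with $X_h$ to show that $\xi(t)$ is the constant $-3\gamma^2/(2\pi\mu)$. The only quibble is a sign-convention slip: with the paper's symplectic matrix $\mathbb{J}_3=\gamma\,\mathbb{J}$ and the convention $\mathbf{i}_{X_h}\Omega=dh$ one has $\dot\varphi_k=-\gamma^{-1}\,\partial h/\partial I_k$ rather than $+\gamma^{-1}\,\partial h/\partial I_k$, and it is this minus sign (combined with $I_3=-\mu/\gamma$) that makes your displayed identity $\xi=\mu^{-1}\,\mathbf{I}\cdot\nabla_I h$, and hence the sign of the final answer, come out correctly.
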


\begin{remark}
	Note that, phrased in terms of the virial and the second moment of circulation, the formula for the dynamic phase for both the three- and four- vortex problems coincide (at least for equal vortex strengths).  Cf. theorems \ref{thm:dynamicPhase} and \ref{thm:dynamicPhase4V}.
\end{remark}

\section{Summary and future directions.} 
\label{sec:summary}

In this paper, pure reconstruction phases, which are drifts in symmetry group directions obtained at the end of symmetry reduced periodic orbits, are computed for the Hamiltonian systems of (a) three point vortices of arbitrary strengths in the plane
and (b) four point vortices of identical strengths in the plane, respectively. Such phases are distinguished from the adiabatic phases made popular by the works of Berry and Hannay in that they do not require a (infinitely) slow evolution of any system parameter  and are obtained `purely' by the method of reconstruction--the opposite of the method of symmetry reduction. As per the general theory, relative to the choice of a connection the total phase can be viewed as being composed of a geometric phase and a dynamic phase. The geometric phase in both models is shown to be expressible as integrals over the domains enclosed by the periodic orbits on the symmetry reduced spaces. In model (a), at least, the integral is shown to be related to the area enclosed by the closed loop.

The main contribution of this paper, however, is obtaining simple exact formulas that relate the dynamic phase in each model to the time period of the periodic orbit. Dynamic phases are typically expressed as time integrals involving non-trivial integrands.  Simple exact formulas for the dynamic phase are rare in the literature on phases. A notable example being Montgomery's formula for the dynamic phase in the motion of a free rigid body \cite{Mo91}. The homogeneity of the Hamiltonian function plays an important role in obtaining these simple formulas but it is not enough. The choice of the connection 1-form and the properties of the isotropy subgroup are also important.  More precisely, the simplicity of our formula for the dynamic phase seems to be a consequence of the fact that \emph{a}) the Hamiltonian is a $ \log $-homogeneous function of the distances between the particles (vortices), \emph{b}) the metric and symplectic forms are compatible (the configuration space is a K\"ahler manifold), and \emph{c}) the coadjoint-isotropy subgroup is Abelian.

As remarked at the beginning of section \ref{sec:phasesFourVortices}, the methods of this paper should allow to obtain formulas for the reconstruction phases in the case of non-identical four vortices.  These could be applied to the various periodic orbits found by exploiting discrete symmetries, as in the papers mentioned at the beginning of section \ref{sec:phasesFourVortices}.  It is noteworthy that discrete symmetries also play a prominent role in periodic orbits found in celestial mechanics; see e.g. \cite{FePo2008} and references therein.  It is thus intriguing to explore how various discrete symmetries get manifested in the obtention of reconstruction phases for such systems.

Since the pioneering work of Kirchhoff and Helmholtz, the $N$-point-vortex model in the plane has seen several developments and extensions. In particular, extensions to $\mathbb{R}^3$ where the vorticity resides on curves---vortex filaments/rings--have been widely studied. An interesting history of work in this area (till the year 1996) may be found in \cite{Ri1996}. Discrete vortex models have also found applications in diverse areas of physics and mechanics, extending well beyond classical fluids, for example, in superfluids and magnetohydrodynamics, and also to problems of quantization of classical vortices. A few representative examples of the large amount of work in these areas are \cite{Fe1966, Ne1990, KiTaPeMc1993, KiTaMcPe1994, GoMeSh1987}. More recently, models in higher dimensional spaces, $\mathbb{R}^4$ and higher, for both classical and superfluids have also been developed \cite{Li1998, Je2002, HaVi2003, Sh2012, Kh2012}. 

    A future endeavor would be to apply the techniques of this paper to these extensions. Hamiltonian models of such systems (in spaces $\mathbb{R}^3$ and higher) are in general infinite dimensional. Some special configurations, such as axisymmetric circular vortex rings, allow a finite dimensional representation and indeed reconstruction phases have been computed for such models \cite{ShMa2003}. Numerical simulations in the papers \cite{RyLe1997, JeSm2011}, which have different objectives from this work, strongly suggest that reconstruction phases may exist even in infinite dimensional models of vortex rings. Basic ideas used in this paper such as the metric, the symmetry groups and the connection 1-form can presumably be carried over to such models. However, the canonical transformations and the characterization of the symmetry reduced spaces are ideas which greatly benefit from the finite dimensionality of the point vortex model. Extending these to infinite dimensional models could prove to be challenging. A more promising direction is to apply the ideas of this paper to $N$-point-vortex models developed for 2-dimensional compact boundaryless surfaces; see, for example, \cite{BoKo2015} which also contains a good list of references.  These are again finite dimensional systems, with the underlying curvature of the surface and its compactness strongly influencing the dynamics of the point vortices. It would be interesting to study reconstruction phases in these models for cases where there exist symmetries and symmetric periodic orbits. 

\paragraph{Acknowledgments.} BNS is very grateful to Matthew Perlmutter for several discussions on the topic of this paper.  Both authors wish to thank the anonymous referees for their helpful comments.

%:APPENDICES
\appendix

\section{Appendix: Independence of the choice of a JBH chart.}
\label{apdx:JBH_charts}

In this appendix we establish that, in the context of the three-vortex problem, formula \eqref{geometricPhaseSimpleArea}  for the geometric phase in terms of the area enclosed by a loop $ c _\mu $ is indeed valid even when $ c _\mu $ is not homotopic to a point without passing through a singular point of the coordinate chart.  Hence the geometric phase is independent of the different possible choices of a Jacobi-Bertrand-Haretu (JBH) coordinate chart.

In section \ref{sec:jacobiCoordinates} we introduced the JBH coordinates by first considering the vector from vortex \emph{one} to vortex \emph{two}, and then the vector from their center of circulation to vortex \emph{three}.  This coordinate chart was denoted $ T _1 $, but for the purposes of the following discussion we will add the superscript ``$(3)$'', to remind us that in this chart  vortex three is the last one taken into account.  This choice induces the coordinate chart $ \psi ^{(3)}: P _0 \rightarrow \mathbb{R} ^2 \times \mathbb{T} ^2$ given by $ \psi ^{(3)} = T _3 \circ T _2 \circ T _1 ^{ (3) } $, with $ T _2 $ and $ T _3 $ the transformations introduced in sections \ref{sec:AACoords}--\ref{sec:rotationalSymmetry}.  Let us also add the superscript to the coordinates furnished by $ \psi ^{(3)} $, so that we write
\[
	\psi ^{ (3) } (\mathbf{z}) = \left( T _3 \circ T _2 \circ T _1 ^{ (3) } \right) (\mathbf{z}) = \left( I _1 ^{ (3) }, I _2 ^{ (3) }; \varphi _1 ^{ (3) }, \varphi _2 ^{ (3) } \right) \,,
\] 
with $ \mathbf{z} = (z _1, z _2, z _3) \in P _0 \subset \mathbb{C} ^3 $.

Likewise there are two other JBH-charts that can be considered, namely $ T _1 ^{ (k) } $, $ k = 1, 2 $, constructed by first taking the vector $ r ^{ (k) } $ from vortex $i$ to vortex $j$ and then the vector $ s ^{ (k) } $ from their centers of vorticity to vortex $k$.  The coordinate chart $ \psi ^{(k)} = T _3 \circ T _2 \circ T _1 ^{ (k) } $ is thus defined, yielding the coordinates $ \left( I _1 ^{ (k) }, I _2 ^{ (k) }; \varphi _1 ^{ (k) }, \varphi _2 ^{ (k) } \right) $, $ k = 1, 2 $.  (The transformations $ T _2 $ and $ T _3 $ remain unchanged.)

If $ W _0 > 0 $, each coordinate chart $ \psi ^{(k)} $ has two \emph{singular points} associated to it, corresponding to $ I ^{(k)}_1 = \pm \mu $, $ \varphi ^{(k)}_1 \in [0, \pi] $.  These correspond to i) binary collision of vortices $i$ and $j$, and ii) vortex $k$ positioned at the center of circulation of vortices $i$ and $j$.   On the sphere $ \mathcal{P} _\mu $ these correspond to two antipodal points.  Note that the singular points of coordinate charts $ \psi ^{ (1) }, \psi ^{ (2) }, \psi ^{ (3) } $ all lie on the same great circle.  If $ W _0 < 0 $ then each coordinate chart $ \psi ^{(k)} $ has only one singular point associated to it, corresponding to either binary collision or a collinear configuration.

Proposition \ref{prop:geometric_phase_JBH_charts} below effectively shows that formula \eqref{geometricPhaseSimpleArea} for the geometric phase does not depend on the choice of coordinate chart.  Before stating it let us discuss some preliminary notions and lemmas.

\subsection*{Orientation}

Let us recall some basic facts about orientable two-dimensional manifolds and establish some terminology.

Let $M$ be an oriented 2-dim manifold and $\omega$ a non-degenerate 2-form defined on it.  We say that $\omega$ is compatible with the orientation of $M$ if for every $ p \in M $ and every pair of vectors $ u, v \in T _p M $, $ \omega(u,v) > 0 $ iff $ (u,v) $ are positively oriented.
In this manner, given an orientable 2-dim manifold $M$, a non-degenerate 2-form $\omega$ on $M$ induces an orientation of $M$.

Let $R$ be a region in $M$ with smooth boundary $ \delta R $.  An orientation in $ \delta R $ is induced by the following requirement:  if $ \gamma(t) $ is a (local) parametrization of $ \delta R $ compatible with its orientation then, for every $ u \in T _{ \gamma(t) } M $, the pair $ (\dot{\gamma} (t), u) $ is positively oriented precisely when $u$ points inward $R$.  The orientation of $ \delta R $ thus defined is \emph{the orientation of $ \delta R $ induced by $R$} or \emph{the positive orientation of $ \delta R $ with respect to $R$}.

If $ \gamma = \delta R $ with a given orientation, we say that \emph{$R$ is compatible with $\gamma$} if $\gamma$ is positively oriented with respect to $R$.
Finally, if $\gamma$ is an oriented curve, $ - \gamma $ denotes the same curve with the opposite orientation.

\subsection*{Geometric phase formula}

We first consider the compact case $ W _0 > 0 $.
Let $ S ^2 _\rho $ denote the 2-dim sphere of radius $ |\rho| \neq 0 $ and $ S ^2 $ the unit sphere.

\begin{definition} \label{def:unitaryprojection}
Given $U$ a region on $ S ^2_\rho $, let 
\[
	\mathcal{A} (U) := \frac12 \iint _{ U _1 } \omega _1 
\] 
where $ U _1 $ is the concentric radial projection of $U$ on $ S ^2 $ and $ \omega _1 $ is the standard area element on the unit sphere.  In other words, \emph{$ \mathcal{A} (U) $ is the semi-area of the projection of $U$ on the unit sphere}.
\end{definition}

\begin{lemma} \label{lem:complementSphere}
Let $U$ be a region on $ S ^2 _\rho $.  Then $ \mathcal{A} (S ^2 _\rho \setminus U) = 2 \pi - \mathcal{A} (U) $.
\end{lemma}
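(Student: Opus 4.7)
The plan is to reduce the statement to the fact that the total area of the unit sphere equals $4\pi$, exploiting that concentric radial projection $ S ^2 _\rho \longrightarrow S ^2 $ is a diffeomorphism that commutes with taking complements.

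First I would observe that the radial projection map $\pi_{\rho}:S ^2 _\rho \to S ^2$, $p\mapsto p/|p|$, is a bijection. Consequently, if $U_1 = \pi_\rho(U)$ is the projection of $U$, then the projection of the complement satisfies $\pi_\rho(S ^2 _\rho \setminus U) = S ^2 \setminus U_1$. This identifies the relevant region on the unit sphere used in Definition~\ref{def:unitaryprojection} when computing $\mathcal{A}(S ^2 _\rho \setminus U)$.

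Next I would invoke additivity of the integral of $\omega_1$ over $S ^2$ with respect to the disjoint decomposition $S ^2 = U _1 \sqcup (S ^2 \setminus U _1)$ (up to a measure-zero boundary):
\[
	\iint _{ S ^2 } \omega _1 = \iint _{ U _1 } \omega _1 + \iint _{ S ^2 \setminus U _1 } \omega _1 \,.
\]
Using the standard fact that the total area of the unit sphere equals $4\pi$, i.e. $\iint _{ S ^2 } \omega _1 = 4 \pi$, one obtains
\[
	\iint _{ S ^2 \setminus U _1 } \omega _1 = 4 \pi - \iint _{ U _1 } \omega _1 \,.
\]
Multiplying by $1/2$ and applying Definition~\ref{def:unitaryprojection} yields $\mathcal{A}(S ^2 _\rho \setminus U) = 2 \pi - \mathcal{A}(U)$, as required.

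There is essentially no obstacle here; the only mild subtlety is the measure-theoretic justification that the shared boundary $\partial U_1 = \partial (S ^2 \setminus U _1)$ has $\omega_1$-measure zero, which is automatic as long as $U$ is a region with smooth (or even rectifiable) boundary---the standing hypothesis in the surrounding discussion.
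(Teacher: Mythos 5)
Your proof is correct and follows essentially the same route as the paper, which simply notes that the identity follows from $\mathcal{A}(S^2) = 2\pi$ (i.e.\ the total area of the unit sphere is $4\pi$) together with additivity; you have merely spelled out the projection-commutes-with-complements and measure-zero-boundary details that the paper leaves implicit.
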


\begin{proof}
Follows readily from $ \mathcal{A} (S ^2) = 2 \pi $.
\end{proof}

Recall that:  i) since we are assuming $ \mu \neq 0 $ and $ W _0 > 0 $, we have that $ \mathcal{P} _\mu $ coincides with $ S ^2 _{ \mu } $;  ii) we have chosen a global section $\sigma$ on $ \mathcal{P} _\mu \setminus \{N, S\} $, where $N$ and $S$ are the singular points of chart $ \psi ^{ (k) } $, with respect to which $ \sigma ^\ast \alpha = (1 - I _1 / \mu) \, d \varphi _1 $, where $\alpha$ is the connection 1-form defined in \eqref{conn1form3v}; and iii) we have chosen the orientation on $ \mathcal{P} _\mu $ induced by the 2-form $ \omega _\mu = 2 \mu \, d I _1 \wedge d \varphi _1 $ (see definition \ref{def:omegamu}).

\begin{lemma} \label{lem:stokesProj}
Let $R$ be a region in $ \mathcal{P} _\mu $ (not necessarily simply connected), such that its closure does not contain a singular point of chart $ \psi ^{ (k) }$.  Let $ \delta R $ be positively oriented with respect to $R$.  Then 
\[
	-\int _{ \delta R } \sigma ^\ast \alpha = \mathcal{A} (R)  \,. 
\] 
\end{lemma}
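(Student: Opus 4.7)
The plan is to apply Stokes' theorem directly and then match the resulting 2-form integral to $\mathcal{A}(R)$ by relating $dI_1\wedge d\varphi_1$ to the spherical area element and accounting for the radial projection from $S^2_\mu$ onto $S^2$.

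First, since $\overline{R}\subset\mathcal{P}_\mu\setminus\{N,S\}$, the section $\sigma$ and the 1-form $\sigma^{\ast}\alpha=(1-I_1/\mu)\,d\varphi_1$ from equation \eqref{geometricPhaseDefined} are smooth on a neighborhood of $\overline R$. Stokes' theorem applies (where $\delta R$ carries the boundary orientation induced by the orientation of $R$, which is inherited from $\omega_\mu$ on $\mathcal{P}_\mu$):
\[
\int_{\delta R}\sigma^{\ast}\alpha \;=\; \iint_R d(\sigma^{\ast}\alpha)\;=\;-\frac{1}{\mu}\iint_R dI_1\wedge d\varphi_1.
\]
If $R$ is not simply connected, apply Stokes on each connected component of $\overline R$, noting that $\delta R$ already carries the correct induced orientation on every boundary circle.

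Next I would convert $\iint_R dI_1\wedge d\varphi_1$ into the semi-area $\mathcal{A}(R)$. Using the parametrization \eqref{cyl_coords_sphere}, a direct computation (as carried out in the remark following Definition \ref{def:omegamu}) shows that the intrinsic area element on $S^2_{|\mu|}$ pulls back to $\omega_{|\mu|}=2|\mu|\,dI_1\wedge d\varphi_1$, so a positively oriented (with respect to $\omega_{|\mu|}$) Lebesgue integral in $(I_1,\varphi_1)$ gives $\text{Area}(R)=2|\mu|\cdot\operatorname{Leb}(R)$. Combining this with the fact that radial projection onto $S^2$ rescales areas by $1/|\mu|^2$, one obtains
\[
\mathcal{A}(R)\;=\;\tfrac12\,\text{Area}(R_1)\;=\;\frac{1}{2|\mu|^2}\,\text{Area}(R)\;=\;\frac{1}{|\mu|}\operatorname{Leb}(R).
\]

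Finally I would reconcile the signs. The orientation of $R$ is induced by $\omega_\mu=2\mu\,dI_1\wedge d\varphi_1$, so it agrees with the coordinate orientation $dI_1\wedge d\varphi_1$ precisely when $\mu>0$ and is reversed when $\mu<0$. In either case, reading $\iint_R dI_1\wedge d\varphi_1$ in the $\omega_\mu$-orientation yields $\operatorname{sign}(\mu)\cdot\operatorname{Leb}(R)$, hence
\[
-\frac{1}{\mu}\iint_R dI_1\wedge d\varphi_1\;=\;-\frac{\operatorname{sign}(\mu)}{\mu}\operatorname{Leb}(R)\;=\;-\frac{1}{|\mu|}\operatorname{Leb}(R)\;=\;-\mathcal{A}(R),
\]
which combined with Stokes gives $-\int_{\delta R}\sigma^{\ast}\alpha=\mathcal{A}(R)$, as required.

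The only delicate point is the bookkeeping of orientations: the 2-form $\omega_\mu$ reverses sign with $\mu$, while $\mathcal{A}(R)$ is defined as an intrinsically positive quantity, so one must track the factor $\operatorname{sign}(\mu)$ through Stokes; once this is done the argument is essentially a one-line application of Stokes' theorem plus the scaling of the spherical area form.
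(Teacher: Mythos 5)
Your proof is correct and follows essentially the same route as the paper: apply Stokes' theorem to $\sigma^{\ast}\alpha=(1-I_1/\mu)\,d\varphi_1$, insert the $\sign(\mu)$ factor to reconcile the $\omega_\mu$-orientation with the coordinate orientation $dI_1\wedge d\varphi_1$, and identify $2|\mu|\,dI_1\wedge d\varphi_1$ with the area element of $S^2_{|\mu|}$ followed by the $1/\mu^2$ rescaling under radial projection to the unit sphere. The sign bookkeeping and the final identity $-\int_{\delta R}\sigma^{\ast}\alpha=\tfrac{1}{|\mu|}\operatorname{Leb}(R)=\mathcal{A}(R)$ match the paper's computation exactly.
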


\begin{proof}  By Stokes' theorem,
\begin{align*}
	-\int _{ \delta R } \sigma ^\ast \alpha &= \int _{ \delta R } \left( -1 + \frac{ I _1 }{ \mu } \right) \, d \varphi_1 = \frac{ \sign(\mu) }{ \mu }  \iint _{R} d I _1 \wedge d \varphi _1 \\
	&= \frac1{2 \mu ^2}  \iint _R 2 | \mu | \, d I _1 \wedge d \varphi _1 = \frac1{2 \mu ^2} \iint _R \omega _{ | \mu | } \\
	&= \frac12  \iint _{ R _1 } \omega _1 = \mathcal{A} (R) \,, 
\end{align*}
where $ R _1 $ is the concentric radial projection of $R$ on the unit sphere.  In the second equality, the factor $ \sign(\mu) $ has taken care of the fact that $ \omega _\mu $ and $ d I _1 \wedge d\varphi _1 $ induce opposite orientations when $\mu < 0$, and in the second to last equality we have used the identity $ \omega _{ | \mu | } = \mu ^2 \omega _1 $.
\end{proof}

\begin{proposition} \label{prop:geometric_phase_JBH_charts}
Let $ c _\mu $ be an oriented simple closed curve in $ \mathcal{P} _\mu $.  Assume that $ c _\mu $ does not pass through a singular point of chart $ \psi ^{(k)} $.  Let $U$ be the region in $ \mathcal{P} _\mu $ bounded by $ c _\mu $ and compatible with its orientation.  Then
\[
	-\oint _{ \delta R } \sigma ^\ast \alpha = \mathcal{A} (U) \qquad (\operatorname{mod} 2 \pi) \,. 
\] 
\end{proposition}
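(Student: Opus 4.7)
The plan is to extend Lemma \ref{lem:stokesProj} by the standard ``cut-out-small-disks'' device: if the region $U$ contains one or both singular points of the chart $\psi^{(k)}$, we excise small coordinate neighbourhoods of them, apply the lemma to the punctured region, and pass to the limit as the neighbourhoods shrink. The overall relation will then follow modulo $2\pi$ provided that the residual boundary integrals around the excised disks can be shown to tend to integer multiples of $2\pi$.

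Concretely, since $c_\mu$ avoids $\{N,S\}$ by hypothesis, any singular point lying in $U$ lies in its interior. For each enclosed $P \in U \cap \{N,S\}$ and each small $\epsilon > 0$, let $D_\epsilon(P) := \{|I_1 \mp \mu| < \epsilon\}$ be a coordinate neighbourhood of $P$, and set $U_\epsilon := U \setminus \bigcup_P D_\epsilon(P)$. The closure of $U_\epsilon$ is free of singular points, so Lemma \ref{lem:stokesProj} applies and its oriented boundary is $c_\mu$ together with $-\partial D_\epsilon(P)$ for each enclosed $P$. This gives
\[
	-\oint_{c_\mu} \sigma^\ast \alpha \;+\; \sum_{P \in U \cap \{N,S\}} \oint_{\partial D_\epsilon(P)} \sigma^\ast \alpha \;=\; \mathcal{A}(U_\epsilon),
\]
and $\mathcal{A}(U_\epsilon) \to \mathcal{A}(U)$ as $\epsilon \to 0$ since each removed disk has vanishing area on the sphere (or hyperboloid sheet).

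It remains to evaluate each residual loop integral. Using $\sigma^\ast \alpha = (1 - I_1/\mu)\, d\varphi_1$ and parametrizing $\partial D_\epsilon(P)$ by $\varphi_1 \in [0,\pi]$ with $I_1$ held at $\pm\mu \mp \epsilon$, one sees that near $N$ (at $I_1 = +\mu$) the integrand coefficient is $\epsilon/\mu = O(\epsilon)$ so that the residual vanishes in the limit, while near $S$ (at $I_1 = -\mu$) the coefficient tends to $2$ and the residual tends to $\pm 2\pi$ with the sign determined by the boundary orientation induced on $\partial D_\epsilon(S)$ from the $\omega_\mu$-orientation on $U_\epsilon$. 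In either case, the residual is an integer multiple of $2\pi$, so passing to the limit yields
\[
	-\oint_{c_\mu} \sigma^\ast \alpha \;\equiv\; \mathcal{A}(U) \pmod{2\pi},
\]
as claimed.

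The main technical obstacle is the orientation bookkeeping near $S$: one must verify carefully that the boundary orientation inherited by $\partial D_\epsilon(S)$ from the $\omega_\mu$-orientation on $U_\epsilon$ (which itself depends on $\sign \mu$, since $\omega_\mu = 2\mu\, dI_1 \wedge d\varphi_1$) produces exactly $\pm 2 \int_0^\pi d\varphi_1 = \pm 2\pi$, rather than, say, $\pm \pi$ due to a miscount of the period. The noncompact case $W_0 < 0$ is handled by the same argument, and is in fact simpler: $\mathcal{P}_\mu$ is a single sheet of a hyperboloid carrying only one singular point per chart, every simple closed curve bounds a unique region, and the residue computation goes through verbatim.
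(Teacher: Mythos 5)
Your argument is correct, and its engine --- Stokes on the punctured region via Lemma \ref{lem:stokesProj}, plus an explicit evaluation of the residual boundary integral of $\sigma^\ast\alpha=(1-I_1/\mu)\,d\varphi_1$ on the circles $I_1=\pm\mu\mp\epsilon$ over $\varphi_1\in[0,\pi]$ --- is exactly the paper's mechanism for its ``Case 2'' (the curve essential in the chart, one enclosed singular point). Where you differ is in the treatment of the contractible case with both singular points enclosed: the paper splits into cases according to whether $c_\mu$ is contractible within the chart, and for $n=2$ enclosed singular points it passes to the complementary region $\mathcal{P}_\mu\setminus U$, applies Lemma \ref{lem:stokesProj} there, and invokes Lemma \ref{lem:complementSphere} ($\mathcal{A}(S^2_\rho\setminus U)=2\pi-\mathcal{A}(U)$) to convert back; you instead excise disks around every enclosed singular point uniformly, so that all of $n=0,1,2$ are handled by one computation and Lemma \ref{lem:complementSphere} is never needed. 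Your route buys a single uniform argument and makes visible \emph{where} the $2\pi$ discrepancy is produced (the south pole $I_1=-\mu$, where the coefficient of $d\varphi_1$ tends to $2$ and contributes $\pm 2\int_0^\pi d\varphi_1=\pm 2\pi$, versus $O(\epsilon)$ at the north pole); the paper's complement trick buys an exact identity $-\oint\sigma^\ast\alpha=-2\pi+\mathcal{A}(U)$ for $n=2$ without any limiting process. Your worry about a $\pi$-versus-$2\pi$ miscount is resolved exactly as you indicate (the fiber of the cylindrical chart is $\varphi_1\in[0,\pi]$, consistent with the paper's own integral $\int_0^\pi(1-(\mu-\epsilon)/\mu)\,d\varphi_1$ in Case 2), and since only the class mod $2\pi$ is claimed, the residual sign at $S$ is immaterial. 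Your remarks on the non-compact case also agree with the paper's appendix discussion, up to the caveat the paper adds that if $c_\mu$ is negatively oriented with respect to the unique bounded region a sign must be inserted.
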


\begin{proof}
	We consider two possible cases.
	
	\emph{Case 1:}  $ c _\mu $ is contractible to a point without going through a singular point of chart $ \psi ^{(k)} $.  Let $n$ be the number of singular points inside $U$.  Then $ n = 0 $ or $ n = 2 $.  Let
\[
	U ^\ast = \left\{ \begin{array}{c@{\quad\text{if}\quad}r} U & n = 0 \\ \mathcal{P} _\mu \setminus U & n = 2 \end{array} \right. \,. 
\] 
Hence $ U ^\ast $ contains no singular points and $ (-1) ^{ n/2 } c _\mu $ is positively oriented with respect to $ U ^\ast $.
By lemma \ref{lem:stokesProj},
\begin{displaymath}
	-\oint _{ c _\mu } \sigma ^\ast \alpha = - (-1) ^{ n/2 } \oint _{ (-1) ^{ n/2 } c _\mu } \sigma ^\ast \alpha 
	= (-1) ^{ n/2 } \mathcal{A} ( U ^\ast ) \,, 
\end{displaymath}
which equals $ \mathcal{A} (U) $ if $ n = 0 $ and, by lemma \ref{lem:complementSphere}, equals $ - 2 \pi + \mathcal{A} (U) $ if $ n = 2 $.  Thus the claim follows.

	\emph{Case 2:}  $ c _\mu $ is not contractible to a point without going through a singular point of chart $ \psi ^{ (k) }$.  Say, without loss of generality, that $U$ contains the ``north pole'' $ N = (I _1 = \mu, \varphi _1 \in [0, \pi]) $.  Let $ \kappa _\varepsilon $ be the circle $ \{\mu - \varepsilon\} \times [0, \pi] $, with $ \varepsilon > 0 $ small enough so that $ \kappa _\varepsilon \subset U $.  Let $ V _\varepsilon $ be the region bounded by $ \kappa _\varepsilon $ containing $N$.  Orient $ k _\varepsilon $ positively with respect to $ V _\varepsilon $.  Then, by lemma \ref{lem:stokesProj},
	\begin{align*}
	 - \oint _{ c _\mu } \sigma ^\ast \alpha &= -\int _{ c _\mu \cup (- \kappa _\varepsilon) } \sigma ^\ast \alpha
		\; - \;
		\oint _{ \kappa _\varepsilon } \sigma ^\ast \alpha  \\
	&= \mathcal{A} ( U \setminus V _\epsilon ) - \int _0^{ \pi } \left( 1 - \frac{ \mu - \varepsilon }{ \mu } \right) \, d\varphi _1 \,. 
	\end{align*}
Taking the limit as $ \epsilon \rightarrow 0 $ the claim follows.
\end{proof}

Let us now consider the non-compact case $ W _0 < 0 $.  Let $ c _\mu $ be a simple closed curve on $ P _\mu $, now identified with one sheet of the hyperboloid $ H ^2 _\mu $.  Let $U$ be the region bounded by $ c _\mu $.  Assume that $ c _\mu $ is positively oriented with respect to $U$.  Definition \ref{def:unitaryprojection} is trivially modified with $ U _1 $ now the radial projection on the normalized hyperboloid $ H ^2 _1 $.  The proof of lemma \ref{lem:stokesProj} and proposition \ref{prop:geometric_phase_JBH_charts} go through without changes, except that in case 1 of the proof of proposition \ref{prop:geometric_phase_JBH_charts} only $ n = 0 $ needs to be considered (and thus lemma \ref{lem:complementSphere} is not needed).  If $ c _\mu $ is negatively oriented with respect to $U$, a minus sign is added to the formula in proposition \ref{prop:geometric_phase_JBH_charts}.

\begin{remark}
	The case when $ c _\mu $ passes through a singular point of a coordinate chart need not be considered, since these are either equilibrium points or unreachable points of the vortex dynamics.
\end{remark}

\section{Appendix: The Aref-Pomphrey expressions for the time period in the case of three identical vortices.}
\label{apdx:ArefPomphreyTimePeriod}

   The ODE satisfied by the variable  $I:=(I_1/I_2)^2$ is 
\begin{align}
\left( \frac{dI}{d \tau} \right)^2&=-I \left(I^3+6I^2+I(9-24\lambda^2) +8 \lambda^2(2 \lambda^2- 1) \right), \label{eq:i2eqn}
\end{align}
where $\lambda^2:=e^{-4 \pi H/\gamma^2}/(4 I_2^3)$, and 
\[\tau:= \frac{3 \gamma^2}{8 \pi \lambda^2 I_2} t\]
 is time rescaled for the trajectory. Since $H$ and $I_2$ are conserved quantities, $\lambda^2$ is constant for a trajectory on the symplectic reduced space and is fixed by the initial conditions. Note that the $\Lambda^2$ and $\tau$ defined in \cite{ArefPomphrey1982} are not exactly the same due to the following slight (and unimportant) differences: the definitions of the $(I_1, I_2, \varphi_1, \varphi_2)$ coordinates and the assumption of non-unit vortex strength $\Gamma$.
 
Aref and Pomphrey \cite{ArefPomphrey1982} show that the solution of~(\ref{eq:i2eqn}), depending on the value of $\lambda$, is one of the following: 
\begin{align}
I_{(a)}(\tau;\lambda)&=\frac{\mathcal{I}_0-\mathcal{I}_1 \kappa_a \operatorname{sn}^2 (\omega_a \tau)}{1-\kappa_a \operatorname{sn}^2 (\omega_a \tau) }, \quad 0 < \lambda < \frac{1}{\sqrt{2}}, \label{eq:ia} \\
I_{(b)}(\tau;\lambda)&=\frac{\mathcal{I}_0-\mathcal{I}_1 \kappa_b \operatorname{sn}^2 (\omega_b \tau)}{1-\kappa_b \operatorname{sn}^2 (\omega_b \tau) }, \quad \frac{1}{\sqrt{2}} < \lambda < 1 , \label{eq:ib}
\end{align}
where $\operatorname{sn}$ denotes the Jacobi elliptic sn function, 
\begin{align*}
\kappa_a&=\frac{\mathcal{I}_0}{\mathcal{I}_1}, \quad \kappa_b=\frac{\mathcal{I}_2-\mathcal{I}_0}{\mathcal{I}_2-{\mathcal{I}_1}}, \\
\omega_a&=\frac{1}{2}\sqrt{\frac{\mathcal{I}_2-\mathcal{I}_0}{\mathcal{I}_1}}, \quad \omega_b=\frac{1}{2}\sqrt{\frac{\mathcal{I}_0}{\mathcal{I}_2-\mathcal{I}_1}}
\end{align*} 
and $\mathcal{I}_n(\lambda), n=0,1,2$ are the three roots of the cubic polynomial in~(\ref{eq:i2eqn}) for which Aref and Pomphrey give explicit expressions  \cite{ArefPomphrey1982}.

   From the theory of elliptic functions and integrals,~(\ref{eq:ia}) and~(\ref{eq:ib}) are periodic with period equal to twice the complete elliptic integral of the first kind $K(m)$, where $m$ is the modulus of the elliptic function. Transforming to time $t$, the time period of the $I$ variable is obtained as  
\[T_a= \frac{16 \pi \lambda^2 I_2 K(m_a)}{3 \gamma^2 \omega_a}, \quad T_b= \frac{16 \pi \lambda^2 I_2 K(m_b)}{3 \gamma^2 \omega_b}\]
with \[m_a^2=\frac{\mathcal{I}_0(\mathcal{I}_2-\mathcal{I}_1)}{\mathcal{I}_1(\mathcal{I}_2-\mathcal{I}_0)}, \quad m_b^2=\frac{\mathcal{I}_1(\mathcal{I}_2-\mathcal{I}_0)}{\mathcal{I}_0(\mathcal{I}_2-\mathcal{I}_1)}\]
In general, some integral multiple of these time periods will be the time period $T$ of the periodic orbit on the symplectic reduced space---the $(I_1,\phi_1)$ space.  Note that $I_1$ is proportional to the oriented area of the vortex triangle but $I:=(I_1/I_2)^2$ masks orientation information.

%:Bibliography

\end{document}